\definecolor{myred}{cmyk}{0.05, 0.32, 0.53, 0}
\definecolor{mygreen}{cmyk}{0.29, 0, 0.24, 0}
\definecolor{mydarkred}{cmyk}{0.22, 0.76, 0.100, 0.15}
\definecolor{mydarkblue}{cmyk}{0.85, 0.70, 0.62, 0.30}
\begin{document}

\title{Practical Rateless Set Reconciliation}

\author{Lei Yang}
\email{leiy@csail.mit.edu}
\orcid{0000-0003-3256-4700}
 \affiliation{%
   \institution{Massachusetts Institute of Technology}
    \streetaddress{77 Massachusetts Avenue}
    \city{Cambridge} 
    \state{MA} 
    \country{USA}
    \postcode{02139}
 }

\author{Yossi Gilad}
\email{yossigi@cs.huji.ac.il}
\orcid{0000-0003-3475-8322}
 \affiliation{%
   \institution{Hebrew University of Jerusalem}
    \city{Jerusalem} 
    \country{Israel}
 }

\author{Mohammad Alizadeh}
\email{alizadeh@csail.mit.edu}
\orcid{0000-0002-0014-6742}
 \affiliation{%
   \institution{Massachusetts Institute of Technology}
    \streetaddress{77 Massachusetts Avenue}
    \city{Cambridge} 
    \state{MA} 
    \country{USA}
    \postcode{02139}
 }

\begin{abstract}
Set reconciliation, where two parties hold fixed-length bit strings and run a protocol to learn the strings they are missing from each other, is a fundamental task in many distributed systems.
We present Rateless Invertible Bloom Lookup Tables (Rateless IBLTs), the first set reconciliation protocol, to the best of our knowledge, that achieves low computation cost and near-optimal communication cost across a wide range of scenarios: set differences of one to millions, bit strings of a few bytes to megabytes, and workloads injected by potential adversaries. Rateless IBLT is based on a novel encoder that incrementally encodes the set difference into an infinite stream of coded symbols, resembling rateless error-correcting codes. We compare Rateless IBLT with state-of-the-art set reconciliation schemes and demonstrate significant improvements. Rateless IBLT achieves $3$--$4\times$ lower communication cost than non-rateless schemes with similar computation cost, and $2$--$2000\times$ lower computation cost than schemes with similar communication cost.
We show the real-world benefits of Rateless IBLT by applying it to synchronize the state of the Ethereum blockchain, and demonstrate $5.6\times$ lower end-to-end completion time and $4.4\times$ lower communication cost compared to the system used in production.
\end{abstract}

\begin{CCSXML}
<ccs2012>
   <concept>
       <concept_id>10003033.10003039.10003051</concept_id>
       <concept_desc>Networks~Application layer protocols</concept_desc>
       <concept_significance>500</concept_significance>
       </concept>
   <concept>
       <concept_id>10003033.10003068</concept_id>
       <concept_desc>Networks~Network algorithms</concept_desc>
       <concept_significance>300</concept_significance>
       </concept>
   <concept>
       <concept_id>10003033.10003039.10003040</concept_id>
       <concept_desc>Networks~Network protocol design</concept_desc>
       <concept_significance>300</concept_significance>
       </concept>
   <concept>
       <concept_id>10002950.10003712.10003713</concept_id>
       <concept_desc>Mathematics of computing~Coding theory</concept_desc>
       <concept_significance>500</concept_significance>
       </concept>
   <concept>
       <concept_id>10010520.10010575</concept_id>
       <concept_desc>Computer systems organization~Dependable and fault-tolerant systems and networks</concept_desc>
       <concept_significance>300</concept_significance>
       </concept>
 </ccs2012>
\end{CCSXML}

\ccsdesc[500]{Networks~Application layer protocols}
\ccsdesc[300]{Networks~Network algorithms}
\ccsdesc[300]{Networks~Network protocol design}
\ccsdesc[500]{Mathematics of computing~Coding theory}
\ccsdesc[300]{Computer systems organization~Dependable and fault-tolerant systems and networks}

\keywords{Set Reconciliation, Rateless Codes, Universal Codes, Data Synchronization, Randomized Algorithms}

\maketitle

\section{Introduction}

Recent years have seen growing interest in distributed applications, such as blockchains~\cite{bitcoin,ethereum,hyperledger}, social networks~\cite{mastodon}, mesh messaging~\cite{meshmessage}, and file hosting~\cite{ipfs}. 
In these applications, \emph{nodes} (participating servers) maintain replicas of the entire or part of the application state, and synchronize their replicas by exchanging messages on a peer-to-peer network.

The naive approach to solving this problem requires communication proportional to the size of the set one party holds. For example, some applications send the entire set while other applications have parties exchange the hashes of their items or a Bloom filter of their sets (the Bloom filter's size is proportional to the set size). A node then requests the items it is missing. All these solutions induce high overhead, especially when nodes have large, overlapping sets. This is a common scenario in distributed applications, such as nodes in a blockchain network synchronizing transactions or account balances, social media servers synchronizing users' posts, or a name system synchronizing certificates or revocation lists~\cite{gnusetrec}.

An emerging alternative solution to the state synchronization problem is \emph{set reconciliation}. 
It abstracts the application's state as a set and then uses a reconciliation protocol to synchronize replicas. Crucially, the overhead is determined by the \emph{set difference} size rather than the set size, allowing support for applications with very large states.
However, existing set reconciliation protocols suffer from at least one of two major caveats. First, most protocols are parameterized by the size of the set difference between the two participating parties. However, in practice, setting this parameter is difficult since scenarios such as temporal network disconnections or fluctuating load on the system make it challenging to know what the exact difference size will be ahead of time. Thus, application designers often resort to running online estimation protocols, which induce additional latency and only give a statistical estimate of the set difference size. Such estimators are inaccurate, forcing the application designers to tailor the parameters to the tail of the potential error, resulting in high communication overhead. The second type of caveat is that some set reconciliation protocols suffer from high decoding complexity, where the recipient has to run a quadratic-time or worse algorithm, with relatively expensive operations.

We propose a \emph{rateless} set reconciliation scheme called Rateless Invertible Bloom Lookup Tables (Rateless IBLT) that addresses these challenges. In Rateless IBLT, a sender generates an infinite stream of \emph{coded symbols} that encode the set difference, and the recipient can decode the set difference when they receive enough coded symbols. Rateless IBLT has no parameters and does not need an estimate of the set difference size. With overwhelming probability, the recipient can decode the set difference after receiving a number of coded symbols that are proportional to the set difference size rather than the entire set size, resulting in low overhead.
Rateless IBLT's coded symbols are {\em universal}. The same sequence of coded symbols can be used to reconcile any number of differences with any other set. Therefore, the sender can create coded symbols once and use them to synchronize with any number of peers.  
The latter property is particularly useful for applications such as blockchain peer-to-peer networks, where nodes may synchronize with multiple sources with overlapping states, since it allows the node to recover the union of their states using coded symbols it concurrently receives from all of them. 

In summary, we make the following contributions:
\begin{enumerate}
    \item The design of Rateless IBLT, the first set reconciliation protocol that achieves low computation cost and near-optimal communication cost across a wide range of scenarios: set differences of one to millions, bit strings of a few bytes to megabytes, and workloads injected by potential adversaries.

    \item A mathematical analysis of Rateless IBLT's communication and computation costs. We prove that when the set difference size $d$ goes to infinity, Rateless IBLT reconciles $d$ differences with $1.35d$ communication. We show in simulations that the communication cost is between $1.35d$ to $1.72d$ on average for all values of $d$ and that it quickly converges to $1.35\times$ when $d$ is in the low hundreds. 

    \item An implementation of Rateless IBLT as a library. When reconciling $1000$ differences, our implementation can process input data (sets being reconciled) at $120$ MB/s using a \emph{single} core of a 2016-model CPU.
    
    \item Extensive experiments comparing Rateless IBLT with state-of-the-art solutions. Rateless IBLT achieves $3$--$4\times$ lower communication cost than regular IBLT~\cite{iblt} and MET-IBLT~\cite{rate-compatible}, two non-rateless schemes; and $2$--$2000\times$ lower computation cost than PinSketch~\cite{pinsketch}.

    \item Demonstration of Rateless IBLT's real-world benefits by applying our implementation to synchronize the account states of the Ethereum blockchain. Compared to Merkle trie~\cite{merkle}, today's de facto solution, Rateless IBLT achieves $5.6\times$ lower completion time and $4.4\times$ lower communication cost on historic traces.
\end{enumerate}

\section{Motivation and Related Work}
\label{sec:related}

We first formally define the set reconciliation problem~\cite{cpisync,difference}. Let $A$ and $B$ be two sets containing items (bit strings) of the same length $\ell$. $A$ and $B$ are stored by two distinct parties, Alice and Bob. They want to efficiently compute the symmetric difference of $A$ and $B$, i.e., $(A \cup B) \setminus (A \cap B)$, denoted as $A \bigtriangleup B$. By convention~\cite{difference}, we assume that only one of the parties, Bob, wants to compute $A \bigtriangleup B$ because he can send the result to Alice afterward if needed.

While straightforward solutions exist, such as exchanging Bloom filters~\cite{bloomfilter} or hashes of the items, they incur $O(|A| + |B|)$ communication and computation costs. The costs can be improved to logarithmic by hashing the sets into Merkle tries~\cite{merkle}, where a trie node on depth $i$ is the hash of a $\left(1/2^i\right)$-fraction of the set. Alice and Bob traverse and compare their tries, only descending into a sub-trie (subset) if their roots (hashes) differ. However, the costs are still dependent on $|A|, |B|$, and now takes $O(\log|A| + \log|B|)$ round trips.

In contrast, the information-theoretic lower bound~\cite[\S~2]{cpisync} of the communication cost is $d\ell$, where $d = |A \bigtriangleup B|$.\footnote{More precisely, the lower bound is $d\ell - d\log_2 d$~\cite[\S~2]{cpisync}, but the second term can be neglected when $d \ll 2^\ell$.} State-of-the-art solutions get close to this lower bound using techniques from coding theory.
On a high level, we can view $B$ as a copy of $A$ with $d$ errors (insertions and/or deletions), and the goal of set reconciliation is to correct these errors.
Alice \emph{encodes} $A$ into a list of $m$ \emph{coded symbols} and sends them to Bob. Bob then uses the coded symbols and $B$ to decode $A \bigtriangleup B$.
The coded symbols are the parity data in a systematic error-correcting code that can correct set insertions and deletions~\cite{reduction}. Using appropriate codes, it takes $m=O(d)$ coded symbols, each of length $O(\ell)$, to correct the $d$ errors, resulting in a communication cost of $O(d\ell)$.

The performance of existing solutions varies depending on the codes they use.
Characteristic Polynomial Interpolation (CPI)~\cite{cpisync} uses a variant of Reed-Solomon codes~\cite{reedsolomon}, where coded symbols are evaluations of a polynomial uniquely constructed from $A$. CPI has a communication cost of $d\ell$, achieving the information-theoretic lower bound. However, its computation cost is $O(|A|d\ell)$ for Alice, and $O(|B|d\ell + d^3 \ell^4)$ for Bob. The latter was improved to $O(|B|d\ell + d^2 \ell^2)$ in PinSketch~\cite{pinsketch,minisketch} using BCH codes~\cite{bchcodes} that are easier to decode. Nevertheless, as we show in \S~\ref{sec:bench-compute}, computation on both Alice and Bob quickly becomes intractable even at moderate $|A|$, $|B|$, and $d$, limiting its applicability. For example, Shrec~\cite{shrec} attempted to use PinSketch to synchronize transactions in a high-throughput blockchain but found that its high computation complexity severely limits system throughput~\cite[\S~5.2]{shrec}.

Invertible Bloom Lookup Tables (IBLTs)~\cite{iblt} use sparse graph codes similar to LT~\cite{lt} and LDPC~\cite{ldpc} codes. Each set item is summed into $k$ coded symbols, denoted as its neighbors in a random, sparse graph. Some variants also consider graphs with richer structures such as varying $k$ depending on the set item~\cite{irregular}. The computation cost is $O(|A|k\ell)$ for Alice, and $O((|B|+d)k\ell)$ for Bob. The communication cost is $O(d\ell)$ with a coefficient strictly larger than $1$ (e.g., $4$--$10$ for small $d$, see \S~\ref{sec:bench-communication}). Due to their random nature, IBLTs
may fail to decode even if properly parameterized~\cite{difference}.
We provide more background on IBLTs in \S~\ref{sec:background}.

The aforementioned discussions assume that the codes are properly \emph{parameterized}. In particular, we need to decide $m$, the number of coded symbols Alice sends to Bob. Decoding will fail if $m$ is too small compared to $d$, and we incur redundant communication and computation if $m$ is too large. The optimal choice of $m$ is thus a function of $d$. However, accurate prediction of $d$ is usually difficult~\cite{erlay,shrec}, and sometimes outright impossible~\cite{graphene}. 
Existing systems often resort to online estimation protocols~\cite{difference} and over-provision $m$ to accommodate the ensuing errors~\cite{difference,graphene}.

\smallskip\noindent\textbf{The case for rateless reconciliation.} A key feature of Rateless IBLT is that it can generate an infinite stream of coded symbols for a set, resembling rateless error-correcting codes~\cite{fountain}. For any $m>0$, the first $m$ coded symbols can reconcile $O(m)$ set differences with a coefficient close to $1$ ($0.74$ in most cases, see \S~\ref{sec:bench-communication}).
This means that Rateless IBLT does not require parameterization, making real-world deployments easy and robust. Alice simply keeps sending coded symbols to Bob, and Bob can decode as soon as he receives enough---which we show analytically (\S~\ref{sec:analysis}) and experimentally (\S~\ref{sec:bench-communication}) to be about $1.35d$ in most cases---coded symbols. Neither Alice nor Bob needs to know $d$ beforehand. The encoding and the decoding algorithms have zero parameters.

The concept of incrementally generating coded symbols is first mentioned in CPI~\cite{cpisync}. However, as mentioned before, its real-world use has been limited due to the high computation cost. We discuss these limitations in \S~\ref{sec:bench}, and demonstrate that Rateless IBLT reduces the computation cost by $2$--$2000\times$, while incurring a communication cost of less than $2\times$ the information-theoretic lower bound.
Concurrently with our work, MET-IBLT~\cite{rate-compatible} proposes to simultaneously optimize the parameters of IBLTs for multiple pre-selected values of $d$, e.g., $d_1, d_2, \dots, d_n$, such that a list of coded symbols for $d_i$ is a prefix/suffix of that for $d_j$. However, it only considers a few values of $d$ due to the complexity of the optimization so still requires workload-dependent parameterization. As we show in \S~\ref{sec:bench-communication}, its communication cost is $4$--$10\times$ higher for the $d$ values that are not optimized for. In addition, MET-IBLT does not provide a practical algorithm to incrementally generate coded symbols. Rateless IBLT does not have any of these issues.

Rateless IBLT offers additional benefits. Imagine that Alice has the canonical system state, and multiple peers wish to reconcile their states with Alice. In a non-rateless scheme, Alice must separately produce coded symbols for each peer depending on the particular number of differences $d$. This incurs additional computation and storage I/Os for every peer she reconciles with. And, more importantly, Alice must produce the coded symbols on the fly because she does not know $d$ before a peer arrives. In comparison, using Rateless IBLT, Alice simply maintains a universal sequence of coded symbols and streams it to anyone who wishes to reconcile. Rateless IBLT also allows her to incrementally update the coded symbols as she modifies the state (set), further amortizing the encoding costs.

To the best of our knowledge, Rateless IBLT is the first set reconciliation solution that simultaneously achieves the following properties:
\begin{itemize}
    \item \textbf{Ratelessness.} The encoder generates an infinite sequence of coded symbols, capable of reconciling any number of differences $d$ with low overhead.
    \item \textbf{Universality.} The same algorithm works efficiently for any $|A|$, $|B|$, $d$, and $\ell$ without any parameter.
    \item \textbf{Low communication cost.} The average communication cost peaks at $1.72d\ell$ when $d=4$, and quickly converges to $1.35d\ell$ when $d$ is at the low hundreds.
    \item \textbf{Low computation cost.} Encoding costs $O(\ell \log d)$ per set item, and decoding costs $O(\ell \log d)$ per difference. In practice, a single core on a 2016-model CPU can encode (decode) 3.4 million items (differences) per second when $d=1000$ and $\ell=8$ bytes.
\end{itemize}
We demonstrate these advantages by comparing with all the aforementioned schemes in \S~\ref{sec:bench} and applying Rateless IBLT to a potential application in \S~\ref{sec:eval}.

\section{Background}
\label{sec:background}

Our Rateless IBLT retains the format of coded symbols and the decoding algorithm of regular IBLTs, but employs a new encoder that is oblivious to the number of differences to reconcile. In this section, we provide the necessary background on IBLTs~\cite{difference,iblt} and explain why regular IBLTs fail to provide the rateless property that we desire.  We discuss the new rateless encoder in the next section. 

On a high level, an IBLT is an encoding of a set. We call the items (bit
strings) in the set the \emph{source symbols}, and an IBLT comprises a list of
$m$ \emph{coded symbols}.
Each source symbol is \emph{mapped} to $k$ coded symbols uniformly at random, e.g., by using $k$ hash functions.
Here, $m$ and $k$ are design parameters.

\smallskip\noindent
\textbf{Coded symbol format.}
A coded symbol contains two fields: \texttt{sum}, the bitwise exclusive-or (XOR) sum of the source symbols mapped to it; and \texttt{checksum}, the bitwise XOR sum of the hashes of the source symbols mapped to it. In practice, there is usually a third field, \texttt{count}, which we will discuss shortly. Fig.~\ref{fig:iblt-example} provides an example.

\begin{figure}
    \centering
    \begin{tikzpicture}[>=latex']
        \small
 
        \node[anchor=east] (setlabel) at (-0.6,-0.4) {$A$};
        \draw [draw=gray, thick, dashed, rounded corners] (-0.5,0.1) rectangle (3.2,-0.9);

        
        \node[draw, circle, text centered,fill=myred] (x0) at (0, -0.4) {$x_0$};
        \node[draw, circle, text centered,fill=myred] (x1) at (0.9, -0.4) {$x_1$};
        \node[draw, circle, text centered,fill=myred] (x2) at (1.8, -0.4) {$x_2$};
        \node[draw, circle, text centered,fill=myred] (x3) at (2.7, -0.4) {$x_3$};

        \node[anchor=east] (codelabel) at (-0.6,-2) {$\text{IBLT}(A)$};
        \draw [draw=gray, thick, dashed, rounded corners] (-0.5,-1.5) rectangle (5.0,-2.5);
        \node[draw, rectangle, text centered, minimum size=1.8em,fill=mygreen] (a0) at (0, -2) {$a_0$};
        \node[draw, rectangle, text centered, minimum size=1.8em,fill=mygreen] (a1) at (0.9, -2) {$a_1$};
        \node[draw, rectangle, text centered, minimum size=1.8em,fill=mygreen] (a2) at (1.8, -2) {$a_2$};
        \node[draw, rectangle, text centered, minimum size=1.8em,fill=mygreen] (a3) at (2.7, -2) {$a_3$};
        \node[draw, rectangle, text centered, minimum size=1.8em,fill=mygreen] (a4) at (3.6, -2) {$a_4$};
        \node[draw, rectangle, text centered, minimum size=1.8em,fill=mygreen] (a5) at (4.5, -2) {$a_5$};

        \draw [] (x0) -- node[] {} (a0);
        \draw [] (x0) -- node[] {} (a1);
        \draw [] (x0) -- node[] {} (a2);
        \draw [] (x1) -- node[] {} (a0);
        \draw [] (x1) -- node[] {} (a3);
        \draw [] (x1) -- node[] {} (a4);
        \draw [] (x2) -- node[] {} (a1);
        \draw [] (x2) -- node[] {} (a2);
        \draw [] (x2) -- node[] {} (a3);
        \draw [] (x3) -- node[] {} (a2);
        \draw [] (x3) -- node[] {} (a4);
        \draw [] (x3) -- node[] {} (a5);

    \end{tikzpicture}
\caption{\label{fig:iblt-example} Example of constructing a regular IBLT for set $A$ with source symbols $x_0, x_1, x_2, x_3$. The IBLT has $m=6$ coded symbols: $a_0, a_1, a_2, a_3, a_4, a_5$. Each source symbol is mapped to $k=3$ coded symbols. Solid lines represent the mapping between source and coded symbols. For example, for $a_4$, $\mathtt{sum}=x_1\oplus x_3$, $\mathtt{checksum}=\mathtt{Hash}(x_1)\oplus \mathtt{Hash}(x_3)$, and $\mathtt{count}=2$. $\oplus$ is the bitwise exclusive-or operator.}
\end{figure}
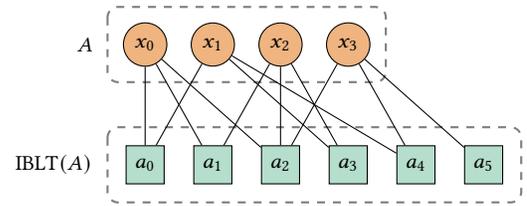

\smallskip\noindent
\textbf{Peeling decoder.} To recover source symbols from a list of coded symbols, the decoder runs a recursive procedure called ``peeling''.
We say a coded symbol is \emph{pure} when exactly one source symbol is mapped to it; or, equivalently, when its \texttt{checksum} equals the hash of its \texttt{sum}~\cite{difference}.\footnote{Unless there is a hash collision, which happens with negligible probability in the length of the hash. See \S~\ref{sec:hash-function}.}
In this case, its \texttt{sum} field is the source symbol itself, which is now recovered. 
The decoder then removes the recovered source symbol from any other coded symbols it is mapped to
(determined by the $k$ agreed-upon hash functions),
by XOR-ing the source symbol and its hash into their \texttt{sum} and \texttt{checksum} fields, respectively. This process may generate additional pure symbols; the decoder repeats until no pure symbols are left. Decoding fails if it stops before recovering all source symbols~\cite{iblt}. Fig.~\ref{fig:iblt-decode} shows the example of decoding the IBLT in Fig.~\ref{fig:iblt-example}. 

\begin{figure}
    \centering%
\begin{subfigure}[t]{0.33\columnwidth}
\centering%
\begin{tikzpicture}[>=latex']
\small
\node[draw, circle, fill=myred, minimum size=0.8em, inner sep=0pt] (x0) at (0, -0.4) {};
\node[draw, circle, fill=myred, minimum size=0.8em, inner sep=0pt] (x1) at (0.4, -0.4) {};
\node[draw, circle, fill=myred, minimum size=0.8em, inner sep=0pt] (x2) at (0.8, -0.4) {};
\node[draw, circle, fill=mydarkred, minimum size=0.8em, inner sep=0pt] (x3) at (1.2, -0.4) {};

        \node[draw, rectangle, minimum size=0.72em,fill=mygreen] (a0) at (0, -1.5) {};
        \node[draw, rectangle, minimum size=0.72em,fill=mygreen] (a1) at (0.4, -1.5) {};
        \node[draw, rectangle, minimum size=0.72em,fill=mygreen] (a2) at (0.8, -1.5) {};
        \node[draw, rectangle, minimum size=0.72em,fill=mygreen] (a3) at (1.2, -1.5) {};
        \node[draw, rectangle, minimum size=0.72em,fill=mygreen] (a4) at (1.6, -1.5) {};
        \node[draw, rectangle, minimum size=0.72em,fill=mydarkblue] (a5) at (2, -1.5) {};

        \draw [] (x0) -- node[] {} (a0);
        \draw [] (x0) -- node[] {} (a1);
        \draw [] (x0) -- node[] {} (a2);
        \draw [] (x1) -- node[] {} (a0);
        \draw [] (x1) -- node[] {} (a3);
        \draw [] (x1) -- node[] {} (a4);
        \draw [] (x2) -- node[] {} (a1);
        \draw [] (x2) -- node[] {} (a2);
        \draw [] (x2) -- node[] {} (a3);
        \draw [dashed] (x3) -- node[] {} (a2);
        \draw [dashed] (x3) -- node[] {} (a4);
        \draw [dashed] (x3) -- node[] {} (a5);
    \end{tikzpicture}
    \caption{Iteration 1}
     \end{subfigure}\hfill\begin{subfigure}[t]{0.33\columnwidth}\centering%
    \begin{tikzpicture}[>=latex']
    \small
        \node[draw, circle, fill=myred, minimum size=0.8em, inner sep=0pt] (x0) at (0, -0.4) {};
\node[draw, circle, fill=mydarkred, minimum size=0.8em, inner sep=0pt] (x1) at (0.4, -0.4) {};
\node[draw, circle, fill=myred, minimum size=0.8em, inner sep=0pt] (x2) at (0.8, -0.4) {};
\node[draw, circle, fill=mydarkred, minimum size=0.8em, inner sep=0pt] (x3) at (1.2, -0.4) {};

        \node[draw, rectangle, minimum size=0.72em,fill=mygreen] (a0) at (0, -1.5) {};
        \node[draw, rectangle, minimum size=0.72em,fill=mygreen] (a1) at (0.4, -1.5) {};
        \node[draw, rectangle, minimum size=0.72em,fill=mygreen] (a2) at (0.8, -1.5) {};
        \node[draw, rectangle, minimum size=0.72em,fill=mygreen] (a3) at (1.2, -1.5) {};
        \node[draw, rectangle, minimum size=0.72em,fill=mydarkblue] (a4) at (1.6, -1.5) {};

        \draw [] (x0) -- node[] {} (a0);
        \draw [] (x0) -- node[] {} (a1);
        \draw [] (x0) -- node[] {} (a2);
        \draw [dashed] (x1) -- node[] {} (a0);
        \draw [dashed] (x1) -- node[] {} (a3);
        \draw [dashed] (x1) -- node[] {} (a4);
        \draw [] (x2) -- node[] {} (a1);
        \draw [] (x2) -- node[] {} (a2);
        \draw [] (x2) -- node[] {} (a3);
    \end{tikzpicture}
         \caption{Iteration 2}
     \end{subfigure}\hfill\begin{subfigure}[t]{0.33\columnwidth}\centering%
    \begin{tikzpicture}[>=latex']
    \small
        \node[draw, circle, fill=mydarkred, minimum size=0.8em, inner sep=0pt] (x0) at (0, -0.4) {};
\node[draw, circle, fill=mydarkred, minimum size=0.8em, inner sep=0pt] (x1) at (0.4, -0.4) {};
\node[draw, circle, fill=mydarkred, minimum size=0.8em, inner sep=0pt] (x2) at (0.8, -0.4) {};
\node[draw, circle, fill=mydarkred, minimum size=0.8em, inner sep=0pt] (x3) at (1.2, -0.4) {};

        \node[draw, rectangle, minimum size=0.72em,fill=mydarkblue] (a0) at (0, -1.5) {};
        \node[draw, rectangle, minimum size=0.72em,fill=mygreen] (a1) at (0.4, -1.5) {};
        \node[draw, rectangle, minimum size=0.72em,fill=mygreen] (a2) at (0.8, -1.5) {};
        \node[draw, rectangle, minimum size=0.72em,fill=mydarkblue] (a3) at (1.2, -1.5) {};

        \draw [dashed] (x0) -- node[] {} (a0);
        \draw [dashed] (x0) -- node[] {} (a1);
        \draw [dashed] (x0) -- node[] {} (a2);
        \draw [dashed] (x2) -- node[] {} (a1);
        \draw [dashed] (x2) -- node[] {} (a2);
        \draw [dashed] (x2) -- node[] {} (a3);
    \end{tikzpicture}
         \caption{Iteration 3}
     \end{subfigure}
\caption{\label{fig:iblt-decode} Example of decoding the IBLT in Fig.~\ref{fig:iblt-example} using peeling. Dark colors represent pure coded symbols at the beginning of each iteration, and source symbols recovered so far. Dashed edges are removed at the end of each iteration, by XOR-ing the source symbol (now recovered) and its hash on one end of the edge into the \texttt{sum} and \texttt{checksum} fields of the coded symbol on the other end.}
\end{figure}
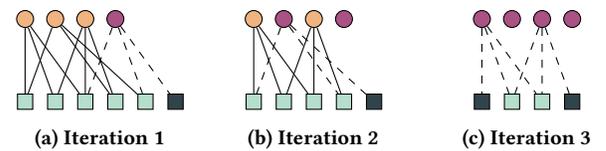

\smallskip\noindent
\textbf{Subtraction of coded symbols.} $a \oplus b$ denotes subtraction of two coded symbols $a, b$. For the resulting coded symbol, its $\texttt{sum}$ is the bitwise XOR of $a\mathtt{.sum}$ and $b\mathtt{.sum}$; its $\texttt{checksum}$ is the bitwise XOR of $a\mathtt{.checksum}$ and $b\mathtt{.checksum}$; and its $\texttt{count}$ is $a\mathtt{.count}-b\mathtt{.count}$.

\smallskip\noindent
\textbf{Set reconciliation using IBLTs.} 
IBLTs with the same parameter configuration ($m$, $k$, and hash functions mapping from source symbols to coded symbols) can be subtracted~\cite{difference}.
For any two sets $A$ and $B$, $\text{IBLT}(A) \oplus \text{IBLT}(B) = \text{IBLT}(A\bigtriangleup B)$, where the $\oplus$ operator subtracts each corresponding pair of coded symbols from the two IBLTs. This is because if a source symbol is present in both $A$ and $B$, then it is XOR-ed twice into each coded symbol it is mapped to in $\text{IBLT}(A) \oplus \text{IBLT}(B)$, resulting in no effect. As a result, $\text{IBLT}(A) \oplus \text{IBLT}(B)$ is an encoding of the source symbols that appear exactly once across $A$ and $B$, i.e., $A \bigtriangleup B$.

To reconcile $A$ and $B$, Alice sends $\text{IBLT}(A)$ to Bob, who then computes and decodes $\text{IBLT}(A) \oplus \text{IBLT}(B)$ to recover $A \bigtriangleup B$. To determine whether a recovered source symbol belongs to $A$ or $B$, we use the \texttt{count} field.\footnote{Alternatively, Bob may try looking up each item in $A\bigtriangleup B$ against $B$, but this requires indexing $B$, which is undesirable when $|B|$ is large.} It records the number of source symbols mapped to a coded symbol. When a coded symbol is pure, $\texttt{count} =1 $ indicates that the recovered source symbol is exclusive to $A$, and $\texttt{count}=-1$ indicates $B$~\cite{difference}.

\smallskip\noindent
\textbf{Limitations of IBLTs.} IBLTs are not rateless. An IBLT with a particular set of parameters $m, k$ only works for a narrow range of difference size $d$. It quickly becomes inefficient to use it for more or fewer differences than parameterized for.
In \S~\ref{appendix:iblt-musing}, we show Theorems~\ref{thm:iblt-undersize} and~\ref{thm:iblt-oversize}, which we summarize informally here.
First,
with high probability, Bob cannot recover \emph{any} source symbol in $A \bigtriangleup B$ when $d>m$, making undersized IBLTs completely useless.
On the other hand, we cannot simply default to a very large $m$ to accommodate a potentially large $d$. If $d$ turns out to be small, i.e., $d \ll m$, Alice still has to send almost the entire IBLT ($m$ coded symbols) for Bob to decode successfully, leading to high communication cost. Alice cannot dynamically enlarge $m$, either. Each source symbol is already uniformly mapped to $k$ out of $m$ coded symbols upon encoding. Increasing $m$ post hoc would require remapping the source symbols to the expanded space of coded symbols so that the mapping remains uniform. This requires Alice to rebuild and re-send the entire IBLT. Figs.~\ref{fig:sample-traditional-small},~\ref{fig:sample-traditional-big} show an example.

\section{Design}
\label{sec:design}

\begin{figure}
    \centering%
\begin{subfigure}[t]{0.5\columnwidth}%
\centering%
\begin{tikzpicture}[>=latex']
\small
\node[anchor=west,inner sep=0,outer sep=0] (sourcelabel) at (-1.81,-0.4) {\small$A\bigtriangleup B$};
\node[anchor=west,inner sep=0,outer sep=0] (codedlabel) at (-1.81,-1.5) {\small IBLT};
\node[draw, circle, fill=myred, minimum size=0.8em, inner sep=0pt] (x0) at (0, -0.4) {};
\node[draw, circle, fill=myred, minimum size=0.8em, inner sep=0pt] (x1) at (0.4, -0.4) {};
\node[draw, circle, fill=myred, minimum size=0.8em, inner sep=0pt] (x2) at (0.8, -0.4) {};
\node[draw, circle, fill=myred, minimum size=0.8em, inner sep=0pt] (x3) at (1.2, -0.4) {};
\node[draw, circle, fill=myred, minimum size=0.8em, inner sep=0pt] (x4) at (1.6, -0.4) {};
\node[draw, rectangle, minimum size=0.72em,fill=mygreen] (a0) at (0, -1.5) {};
\node[draw, rectangle, minimum size=0.72em,fill=mygreen] (a1) at (0.4, -1.5) {};
\node[draw, rectangle, minimum size=0.72em,fill=mygreen] (a2) at (0.8, -1.5) {};
\node[draw, rectangle, minimum size=0.72em,fill=mygreen] (a3) at (1.2, -1.5) {};

\draw [color=mydarkblue] (x0) -- node[] {} (a1);
\draw [color=mydarkblue] (x1) -- node[] {} (a1);
\draw [color=mydarkblue] (x2) -- node[] {} (a1);
\draw [color=mydarkred] (x3) -- node[] {} (a1);
\draw [color=mydarkblue] (x4) -- node[] {} (a1);
\draw [color=mydarkblue] (x0) -- node[] {} (a0);
\draw [color=mydarkblue] (x3) -- node[] {} (a0);
\draw [color=mydarkblue] (x4) -- node[] {} (a0);
\draw [color=mydarkblue] (x1) -- node[] {} (a2);
\draw [color=mydarkred] (x2) -- node[] {} (a2);
\draw [color=mydarkblue] (x4) -- node[] {} (a2);
\draw [color=mydarkred] (x0) -- node[] {} (a3);
\draw [color=mydarkblue] (x2) -- node[] {} (a3);
\draw [color=mydarkblue] (x3) -- node[] {} (a3);
\draw [color=mydarkblue] (x1) -- node[] {} (a3);
\end{tikzpicture}
\caption{Regular IBLT, $m=4$\label{fig:sample-traditional-small}}
\end{subfigure}\hfill\begin{subfigure}[t]{0.5\columnwidth}\centering%
\begin{tikzpicture}[>=latex']
\small
\node[draw, circle, fill=myred, minimum size=0.8em, inner sep=0pt] (x0) at (0, -0.4) {};
\node[draw, circle, fill=myred, minimum size=0.8em, inner sep=0pt] (x1) at (0.4, -0.4) {};
\node[draw, circle, fill=myred, minimum size=0.8em, inner sep=0pt] (x2) at (0.8, -0.4) {};
\node[draw, circle, fill=myred, minimum size=0.8em, inner sep=0pt] (x3) at (1.2, -0.4) {};
\node[draw, circle, fill=myred, minimum size=0.8em, inner sep=0pt] (x4) at (1.6, -0.4) {};
\node[draw, rectangle, minimum size=0.72em,fill=mydarkblue] (a0) at (0, -1.5) {};
\node[draw, rectangle, minimum size=0.72em,fill=mydarkblue] (a1) at (0.4, -1.5) {};
\node[draw, rectangle, minimum size=0.72em,fill=mydarkblue] (a2) at (0.8, -1.5) {};
\node[draw, rectangle, minimum size=0.72em,fill=mydarkblue] (a3) at (1.2, -1.5) {};
\node[draw, rectangle, minimum size=0.72em,fill=mydarkblue] (a4) at (1.6, -1.5) {};
\node[draw, rectangle, minimum size=0.72em,fill=mydarkblue] (a5) at (2, -1.5) {};
\node[draw, rectangle, minimum size=0.72em,fill=mydarkblue] (a6) at (2.4, -1.5) {};

\draw [color=mydarkblue] (x1) -- node[] {} (a0);
\draw [color=mydarkblue] (x2) -- node[] {} (a0);
\draw [color=mydarkblue] (x4) -- node[] {} (a0);
\draw [color=mydarkred] (x3) -- node[] {} (a1);
\draw [color=mydarkblue] (x4) -- node[] {} (a1);
\draw [color=mydarkred] (x2) -- node[] {} (a2);
\draw [color=mydarkred] (x0) -- node[] {} (a3);
\draw [color=mydarkblue] (x4) -- node[] {} (a3);
\draw [color=mydarkblue] (x0) -- node[] {} (a4);
\draw [color=mydarkblue] (x2) -- node[] {} (a4);
\draw [color=mydarkblue] (x0) -- node[] {} (a5);
\draw [color=mydarkblue] (x1) -- node[] {} (a5);
\draw [color=mydarkblue] (x3) -- node[] {} (a5);
\draw [color=mydarkblue] (x1) -- node[] {} (a6);
\draw [color=mydarkblue] (x3) -- node[] {} (a6);
\end{tikzpicture}
\caption{Regular IBLT, $m=7$\label{fig:sample-traditional-big}}
\end{subfigure}
\par\bigskip
\begin{subfigure}[t]{0.5\columnwidth}%
\centering%
\begin{tikzpicture}[>=latex']
\small
\node[anchor=west,inner sep=0,outer sep=0] (sourcelabel) at (-1.81,-0.4) {\small $A\bigtriangleup B$};
\node[anchor=west,inner sep=0,outer sep=0] (codedlabel) at (-1.81,-1.5) {\small Rateless IBLT};
\node[draw, circle, fill=myred, minimum size=0.8em, inner sep=0pt] (x0) at (0, -0.4) {};
\node[draw, circle, fill=myred, minimum size=0.8em, inner sep=0pt] (x1) at (0.4, -0.4) {};
\node[draw, circle, fill=myred, minimum size=0.8em, inner sep=0pt] (x2) at (0.8, -0.4) {};
\node[draw, circle, fill=myred, minimum size=0.8em, inner sep=0pt] (x3) at (1.2, -0.4) {};
\node[draw, circle, fill=myred, minimum size=0.8em, inner sep=0pt] (x4) at (1.6, -0.4) {};
\node[draw, rectangle, minimum size=0.72em,fill=mygreen] (a0) at (0, -1.5) {};
\node[draw, rectangle, minimum size=0.72em,fill=mygreen] (a1) at (0.4, -1.5) {};
\node[draw, rectangle, minimum size=0.72em,fill=mygreen] (a2) at (0.8, -1.5) {};
\node[draw, rectangle, minimum size=0.72em,fill=mygreen] (a3) at (1.2, -1.5) {};

\draw [color=mydarkred] (x0) -- node[] {} (a0);
\draw [color=mydarkred] (x1) -- node[] {} (a0);
\draw [color=mydarkred] (x2) -- node[] {} (a0);
\draw [color=mydarkred] (x3) -- node[] {} (a0);
\draw [color=mydarkred] (x4) -- node[] {} (a0);
\draw [color=mydarkred] (x0) -- node[] {} (a1);
\draw [color=mydarkred] (x2) -- node[] {} (a1);
\draw [color=mydarkred] (x3) -- node[] {} (a1);
\draw [color=mydarkred] (x4) -- node[] {} (a1);
\draw [color=mydarkred] (x2) -- node[] {} (a2);
\draw [color=mydarkred] (x3) -- node[] {} (a2);
\draw [color=mydarkred] (x2) -- node[] {} (a3);
\draw [color=mydarkred] (x3) -- node[] {} (a3);
\draw [color=mydarkred] (x4) -- node[] {} (a3);
\end{tikzpicture}
\caption{Rateless IBLT, prefix of $m=4$\label{fig:sample-riblt-small}}
\end{subfigure}\hfill\begin{subfigure}[t]{0.5\columnwidth}\centering%
\begin{tikzpicture}[>=latex']
\small
\node[draw, circle, fill=myred, minimum size=0.8em, inner sep=0pt] (x0) at (0, -0.4) {};
\node[draw, circle, fill=myred, minimum size=0.8em, inner sep=0pt] (x1) at (0.4, -0.4) {};
\node[draw, circle, fill=myred, minimum size=0.8em, inner sep=0pt] (x2) at (0.8, -0.4) {};
\node[draw, circle, fill=myred, minimum size=0.8em, inner sep=0pt] (x3) at (1.2, -0.4) {};
\node[draw, circle, fill=myred, minimum size=0.8em, inner sep=0pt] (x4) at (1.6, -0.4) {};
\node[draw, rectangle, minimum size=0.72em,fill=mygreen] (a0) at (0, -1.5) {};
\node[draw, rectangle, minimum size=0.72em,fill=mygreen] (a1) at (0.4, -1.5) {};
\node[draw, rectangle, minimum size=0.72em,fill=mygreen] (a2) at (0.8, -1.5) {};
\node[draw, rectangle, minimum size=0.72em,fill=mygreen] (a3) at (1.2, -1.5) {};
\node[draw, rectangle, minimum size=0.72em,fill=mydarkblue] (a4) at (1.6, -1.5) {};
\node[draw, rectangle, minimum size=0.72em,fill=mydarkblue] (a5) at (2, -1.5) {};
\node[draw, rectangle, minimum size=0.72em,fill=mydarkblue] (a6) at (2.4, -1.5) {};

\draw [color=mydarkred] (x0) -- node[] {} (a0);
\draw [color=mydarkred] (x1) -- node[] {} (a0);
\draw [color=mydarkred] (x2) -- node[] {} (a0);
\draw [color=mydarkred] (x3) -- node[] {} (a0);
\draw [color=mydarkred] (x4) -- node[] {} (a0);
\draw [color=mydarkred] (x0) -- node[] {} (a1);
\draw [color=mydarkred] (x2) -- node[] {} (a1);
\draw [color=mydarkred] (x3) -- node[] {} (a1);
\draw [color=mydarkred] (x4) -- node[] {} (a1);
\draw [color=mydarkred] (x2) -- node[] {} (a2);
\draw [color=mydarkred] (x3) -- node[] {} (a2);
\draw [color=mydarkred] (x2) -- node[] {} (a3);
\draw [color=mydarkred] (x3) -- node[] {} (a3);
\draw [color=mydarkred] (x4) -- node[] {} (a3);
\draw [color=mydarkblue] (x0) -- node[] {} (a4);
\draw [color=mydarkblue] (x2) -- node[] {} (a5);
\draw [color=mydarkblue] (x3) -- node[] {} (a5);
\draw [color=mydarkblue] (x2) -- node[] {} (a6);
\end{tikzpicture}
\caption{Rateless IBLT, prefix of $m=7$\label{fig:sample-riblt-big}}
\end{subfigure}
\caption{\label{fig:iblt-prefix}Regular IBLTs and prefixes of Rateless IBLT for $5$ source symbols. Figs. a, c (left) have too few coded symbols and are undecodable. Figs. b, d (right) are decodable. Red edges are common across each row. Dark coded symbols in Figs. b, d are new or changed compared to their counterparts in Figs. a, c. Imagine that Alice sends $4$ coded symbols but Bob fails to decode. In regular IBLT, in order to enlarge $m$, she has to send all $7$ coded symbols since the existing $4$ symbols also changed. In Rateless IBLT, she only needs to send the $3$ new symbols. The existing $4$ symbols stay the same.}
\end{figure}
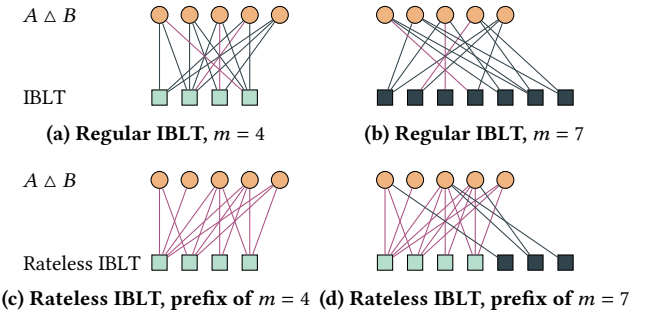

For any set $S$, Rateless IBLT defines an infinite sequence of coded symbols. Intuitively, an infinite number of coded symbols for $A$ and $B$ allows Rateless IBLT to accommodate an arbitrarily large set difference.
Every \emph{prefix} of this infinite sequence functions like a normal IBLT and can reconcile a number of differences proportional to its length. 
Meanwhile, because these prefixes belong to a common infinite sequence, 
Alice simply streams the sequence until Bob receives a long enough prefix to decode.
For any $d > 0$, on average, reconciling $d$ differences requires only the first $1.35d$--$1.72d$ coded symbols in the sequence.

\subsection{Coded Symbol Sequence}
\label{sec:sequence}

Our main task is to design the algorithm that encodes any set $S$ into an infinite sequence of coded symbols, denoted as $s_0, s_1, s_2, \dots$. It should provide the following properties:
\begin{itemize}
    \item \textbf{Decodability.} With high probability, the peeling decoder can recover all source symbols in a set $S$ using a prefix of $s_0, s_1, s_2, \dots$ with length~$O(|S|)$.
    \item \textbf{Linearity.} For any sets $A$ and $B$, $a_0 \oplus b_0, a_1 \oplus b_1, a_2 \oplus b_2, \dots$ is the coded symbol sequence for $A \bigtriangleup B$. 
    \item \textbf{Universality.} The encoding algorithm does not need any extra information other than the set being encoded.
\end{itemize}

These properties allow us to build the following simple protocol for rateless reconciliation.
To reconcile $A$ and $B$, Alice incrementally sends $a_0, a_1, a_2, \dots$ to Bob. Bob computes $a_0\oplus b_0, a_1 \oplus b_1, a_2 \oplus b_2, \dots$, and tries to decode these symbols using the peeling decoder.
Bob notifies Alice to stop when he has recovered all source symbols in $A \bigtriangleup B$. As we will soon show, the first symbol $a_0\oplus b_0$ in Rateless IBLT is decoded only after all source symbols are recovered. This is the indicator for Bob to terminate.

Linearity guarantees that $a_0\oplus b_0, a_1 \oplus b_1, a_2\oplus b_2, \dots$ is the coded symbol sequence for $A \bigtriangleup B$. 
Decodability guarantees that Bob can decode after receiving $O(|A \bigtriangleup B|)$ coded symbols and recover all source symbols in $A \bigtriangleup B$. 
Universality guarantees that Alice and Bob do not need any prior context to run the protocol.

If Alice regularly reconciles with multiple peers, she may cache coded symbols for $A$ to avoid recomputing them every session.  
Universality implies that Alice can reuse the same cached symbols across different peers.
Linearity implies that if she updates her set $A$, she can incrementally update the cached symbols by treating the updates $A \bigtriangleup A'$ as a set and subtracting its coded symbols from the cached ones for $A$.

We now discuss how we design an encoding algorithm that satisfies the three properties we set to achieve.

\subsubsection{Linearity \& Universality}
Our key observation is that to ensure linearity, it is sufficient to
define a consistent \emph{mapping rule}, which, given any source symbol $x \in \{0, 1\}^*$ and any index $i \ge 0$, deterministically decides whether $x$ should be mapped to the $i$-th coded symbol when encoding a set that contains $x$. This ensures that if $x \in A \cap B$, then it will be mapped to both $a_i$ and $b_i$ or neither; in either case, $x$ will not be reflected in $a_i \oplus b_i$. On the other hand, if $x \in A \bigtriangleup B$, and $x$ should be mapped to index $i$ according to the rule, then it will be mapped to \emph{exactly} one of $a_i$ or $b_i$, and therefore will be reflected in $a_i \oplus b_i$.
Since the mapping rule makes decisions based only on $x$ and $i$, the resulting encoding algorithm also satisfies universality. 

\subsubsection{Decodability}
\label{sec:decodability}

Whether the peeling decoder can recover all source symbols from a set of coded
symbols is fully determined by the mapping between the source and the coded
symbols.
Let $\rho(i)$ be the probability that a random source symbol maps to the $i$-th coded symbol, which we refer to as the \emph{mapping probability}. It is the key property that defines the behavior of a mapping rule. In the remainder of this subsection, we constrain $\rho(i)$ by examining two necessary conditions for peeling to succeed. Our key conclusion
is that in order for decodability to hold, $\rho(i)$ must be inversely proportional to $i$. This rejects most functions as candidates of $\rho(i)$
and leads us to a concrete instantiation of $\rho(i)$. We will design a concrete algorithm (mapping rule) that realizes the mapping probability in the next subsection, and mathematically prove that it satisfies decodability in \S~\ref{sec:analysis}.

\emph{First,} to kick-start the peeling decoder, there must be a coded symbol with exactly one source symbol mapped to it (a pure coded symbol). For a set $S$ and index $i$, the probability that this happens decreases quasi-exponentially in $\rho(i)|S|$.
This implies that $\rho(i)$ must decrease quickly with $i$. Otherwise, each of the first $O(|S|)$ coded symbols would have an exponentially small probability of being pure, and it would be likely that none of them is pure, violating decodability.

The following lemma shows that for this reason, the mapping probability $\rho(i)$ cannot decrease slower than $1/i^{1-\epsilon}$ for any positive $\epsilon$, i.e., \emph{almost} inversely proportional to $i$. 
We defer the proof to \S~\ref{sec:proofs}.

\begin{lemma}
\label{lemma1}For any $\epsilon > 0$, any mapping probability $\rho(i)$ such that $\rho(i) = \Omega\left(1/i^{1-\epsilon}\right)$, and any $\sigma > 0$,
if there exists at least one pure coded symbol within the first $m$ coded symbols for a random set $S$ with probability $\sigma$, then $m = \omega(|S|)$.
\end{lemma}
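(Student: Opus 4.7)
The plan is to prove the contrapositive: if $m = O(|S|)$, then the probability that at least one of the first $m$ coded symbols is pure tends to $0$ as $|S| \to \infty$, which contradicts having this probability equal to any fixed $\sigma > 0$. By the mapping rule, each element of a random set $S$ independently maps to coded symbol $i$ with probability $\rho(i)$, so the probability that coded symbol $i$ is pure (i.e., exactly one element mapped to it) is
\[ p_i \;=\; |S|\,\rho(i)\,(1-\rho(i))^{|S|-1} \;\le\; |S|\,\rho(i)\,e^{-\rho(i)(|S|-1)}, \]
where the bound uses $1-x \le e^{-x}$.

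Next, I would convert the asymptotic hypothesis $\rho(i) = \Omega(1/i^{1-\epsilon})$ into a divergent uniform lower bound on $\rho(i)|S|$ in the relevant range. Explicitly, there exist constants $c > 0$ and $i_0$ such that $\rho(i) \ge c\,i^{\epsilon-1}$ for all $i \ge i_0$. If $m \le C|S|$ for some constant $C$, then for every $i_0 \le i \le m$,
\[ \rho(i)\,|S| \;\ge\; \frac{c\,|S|}{m^{1-\epsilon}} \;\ge\; \frac{c}{C^{\,1-\epsilon}}\,|S|^{\epsilon}, \]
which grows unboundedly with $|S|$. Combined with the elementary fact that $x e^{-x/2} \le e^{-x/4}$ for large $x$, this yields $p_i \le e^{-\Omega(|S|^{\epsilon})}$ uniformly over $i_0 \le i \le m$.

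Finally, I would apply a union bound and split the sum at $i_0$:
\[ \Pr[\text{some pure symbol in first } m] \;\le\; \sum_{i=0}^{i_0-1} p_i \;+\; \sum_{i=i_0}^{m-1} p_i. \]
The tail has at most $m = O(|S|)$ terms each of size $e^{-\Omega(|S|^{\epsilon})}$, so it is $o(1)$. The head is a fixed number $i_0$ of terms; for each fixed $\rho(i) \in [0,1]$, the factor $(1-\rho(i))^{|S|-1}$ decays exponentially as $|S| \to \infty$ (or vanishes outright when $\rho(i) \in \{0,1\}$), so each $p_i \to 0$ and the head is also $o(1)$. The total probability thus vanishes, contradicting $\sigma > 0$, and forcing $m = \omega(|S|)$.

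The main obstacle is bookkeeping rather than conceptual: the $\Omega$-hypothesis only controls $\rho(i)$ for indices beyond some $i_0$, so the first $i_0$ terms of the union bound must be treated by a separate direct estimate. Once the split is in place, both pieces are routine Chernoff-style calculations exploiting the super-polynomial decay of $x e^{-x}$, and the key quantitative insight is simply that $\rho(i)|S|$ is forced to grow like $|S|^{\epsilon}$ throughout the range $i \le m = O(|S|)$, making each individual symbol overwhelmingly unlikely to be pure.
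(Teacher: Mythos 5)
Your proposal is correct and follows essentially the same route as the paper: bound the probability that the $i$-th symbol is pure by $|S|\rho(i)e^{-\Omega(\rho(i)|S|)}$, use the $\Omega(1/i^{1-\epsilon})$ hypothesis to force $\rho(i)|S| = \Omega(|S|^{\epsilon})$ uniformly over $i \le m = O(|S|)$, and conclude that the total probability of any pure symbol vanishes. The only (immaterial) differences are that you union-bound the success event and split off the first $i_0$ indices explicitly, whereas the paper lower-bounds the product $\prod_i(1-P_i)$ for the failure event and absorbs the small indices by taking $|S|$ large enough that the smallest non-zero $\rho(i)$, $i\le i_0$, also satisfies the needed lower bound.
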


\emph{Second,} to recover all source symbols in a set $S$, we need at least $|S|$ non-empty coded symbols. This is because during peeling, each pure symbol (which must be non-empty) yields at most one source symbol.
Intuitively, $\rho(i)$ cannot decrease too fast with index $i$. Otherwise, the probability that a coded symbol is empty would quickly grow towards $1$ as $i$ increases. The first $O(|S|)$ coded symbols would not reliably contain at least $|S|$ non-empty symbols, violating decodability.

The following lemma shows that for this reason, the mapping probability $\rho(i)$ cannot decrease faster than $1/i$. We defer the proof to \S~\ref{sec:proofs}.

\begin{lemma}
\label{lemma2}For any mapping probability $\rho(i)$ such that $\rho(i)=o\left(1/i\right)$, and any $\sigma > 0$, if there exist at least $|S|$ non-empty coded symbols within the first $m$ coded symbols for a random set $S$ with probability $\sigma$, then $m = \omega(|S|)$.
\end{lemma}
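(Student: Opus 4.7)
The plan is to prove the contrapositive via a first-moment argument: if $m = O(|S|)$, I aim to show that the expected number $E[N]$ of non-empty coded symbols in the first $m$ satisfies $E[N] = o(|S|)$, so that Markov's inequality gives $\Pr[N \ge |S|] \le E[N]/|S| \to 0$, contradicting the hypothesis $\Pr[N \ge |S|] \ge \sigma > 0$.

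Writing $n = |S|$, symbol $i$ is non-empty with probability $1 - (1-\rho(i))^n \le \min(1, n\rho(i))$. I would assume without loss of generality that $\rho$ is monotone non-increasing, since replacing $\rho(i)$ by $\sup_{j \ge i}\rho(j)$ only increases $E[N]$ and preserves $\rho(i) = o(1/i)$. Let $i^*$ be the largest index with $n\rho(i^*) \ge 1$. Since $i^*\rho(i^*) \to 0$ by hypothesis and $\rho(i^*) \ge 1/n$, we have $i^*/n \le i^*\rho(i^*) \to 0$, so $i^* = o(n)$. Splitting the expectation at $i^*$ gives
\[
E[N] \;\le\; i^* + n\sum_{i=i^*+1}^{m-1}\rho(i),
\]
so the head contributes $o(n)$ and the task is reduced to showing that the tail sum is $o(1)$.

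For the tail, I would write $\rho(i) = \epsilon(i)/i$ with $\epsilon(i) \to 0$, let $\delta_n = \sup_{i > i^*}\epsilon(i)$, and bound
\[
\sum_{i=i^*+1}^{m-1}\rho(i) \;\le\; \delta_n\sum_{i=i^*+1}^{m-1}\frac{1}{i} \;\le\; \delta_n\ln(m/i^*).
\]
With $m = O(n)$ and $\alpha_n := i^*/n = i^*\rho(i^*) \to 0$, this reduces to $\delta_n\ln(1/\alpha_n) + O(\delta_n)$. In the regular case $\delta_n = \alpha_n$ (which holds whenever $i\rho(i)$ is eventually non-increasing), the product collapses to $\alpha_n\ln(1/\alpha_n) \to 0$, yielding $E[N] = o(n)$.

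The main obstacle is ensuring the tail estimate goes through under the bare hypothesis $\rho(i) = o(1/i)$, which does not by itself imply monotonicity of $i\rho(i)$: a priori $\delta_n$ could exceed $\alpha_n$. My plan to close this gap is to additionally monotonise $i\rho(i)$ by replacing $\epsilon(i)$ with its right-running maximum (which preserves both $\epsilon(i) \to 0$ and the upper bound on $E[N]$), forcing $\delta_n = \alpha_n$ and reducing the critical quantity to $x\ln(1/x) \to 0$. Once $E[N] = o(n)$ is established, Markov finishes the proof.
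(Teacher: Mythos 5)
Your proposal is correct in substance but takes a genuinely different route from the paper. The paper also argues by contradiction from $m \le \eta|S|$, but instead of a first-moment bound it makes a crude split at index $|S|/2$: for $i \ge |S|/2$ the pointwise hypothesis gives $\rho(i) < \delta/i \le 2\delta/|S|$ with $\delta = 1/(4\eta)$, so each such symbol is non-empty with probability at most $2\delta = 1/(2\eta)$; the expected number of non-empty symbols among the at most $\eta|S| - |S|/2$ remaining indices is then strictly below $|S|/2$, and Hoeffding shows the probability of reaching $|S|/2$ (hence of reaching $|S|$ overall) is exponentially small in $|S|$. The key structural difference is that the paper never needs to sum $\rho(i)$ over a range where $\sum 1/i$ diverges logarithmically --- the per-index bound $1/(2\eta)$ alone beats the counting requirement --- whereas your Markov argument must control $n\sum_{i>i^*}\rho(i)$ and therefore confronts the $\ln(m/i^*)$ factor head-on. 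Your monotonisation fix does close that gap: after replacing $\epsilon(i)=i\rho(i)$ by its right-running supremum, one gets $\delta_n = \tilde\epsilon(i^*+1) < (i^*+1)/n \le 2\alpha_n$ (not exactly $\delta_n=\alpha_n$ as you write, but an $O(\alpha_n)$ bound suffices), and $\alpha_n\ln(1/\alpha_n)\to 0$ finishes; you should also dispatch the degenerate case where $i^*$ stays bounded, which forces $\tilde\rho$ to vanish beyond a fixed index so the tail is empty. Your approach buys a proof that works index-by-index from first principles and needs only Markov (no independence of the non-emptiness indicators); the paper's buys a much shorter tail estimate and a quantitatively stronger, exponentially small failure probability, at the cost of a looser constant in the split.
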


The constraints above reject functions that decrease faster than $1/i$, as well as functions that decrease slower than $i^\epsilon/i$ for any $\epsilon>0$.
For simplicity, we ignore the degree of freedom stemming from the $i^\epsilon$ factor since for a sufficiently small $\epsilon$ and any practical $i$, it is very close to 1.
The remaining candidates for $\rho(i)$ are the ones in between, i.e., functions of order $1/i$.
We choose the simplest function in this class: 
\begin{equation}
    \label{eq:rhoi}\rho(i) = \frac{1}{1+\alpha i},
\end{equation}
where $\alpha > 0$ is a parameter. We shift the denominator by $1$ because $i$ starts at $0$.
In \S~\ref{sec:analysis}, we prove that this $\rho(i)$ achieves decodability with high efficiency: recovering a set $S$ only requires the first $1.35|S|$--$1.72|S|$ coded symbols on average.

We highlight two interesting properties of our $\rho(i)$. First, $\rho(0)=1$. This means that for any set, every source symbol is mapped to the first coded symbol. This coded symbol is only decoded after all source symbols are recovered. So, Bob can tell whether reconciliation has finished by checking if $a_0 \oplus b_0$ is decoded. Second, among the first $m$ indices, a source symbol is mapped to $\sum_{i=0}^{m-1}\rho(i)$ of them on average, or $O(\log m)$. It means that the \emph{density} of the mapping, which decides the computation cost of encoding and decoding, decreases quickly as $m$ increases. As we will show in~\S~\ref{sec:bench-compute}, the low density allows Rateless IBLT to achieve $2$--$2000\times$ higher throughput than PinSketch.

\subsection{Realizing the Mapping Probability}
\label{sec:mapping}

We now design an efficient deterministic algorithm for mapping a source symbol $s$ to coded symbols that achieves the mapping probability rule identified in the previous section.

Recall that for a random source symbol $s$, we want to make the probability that $s$ is mapped to the $i$-th coded symbol to be $\rho(i)$ in Eq.~\ref{eq:rhoi}. A simple strawman solution, for example, is to use a hash function that, given $s$, outputs a hash value uniformly distributed in $[0, 1)$. We then compare the hash value to $\rho(i)$, and decide to map $s$ to the $i$-th coded symbol if the hash value is smaller. Given a random $s$, because its hash value distributes uniformly, the mapping happens with probability $\rho(i)$. 

However, this approach has major issues. First, it requires comparing hash values and $\rho(i)$ for every pair of source symbol $s$ and index $i$. As mentioned in \S~\ref{sec:decodability}, the density of the mapping is $O(\log m)$ for the first $m$ coded symbols. In contrast, generating the $m$ coded symbols using this algorithm would require $m$ comparisons for each source symbol, 
significantly inflating the computation cost. Another issue is that we cannot use the same hash function when mapping $s$ to different indices $i$ and $j$. Otherwise, the mappings to them would not be independent: if $\rho(i) < \rho(j)$ and $s$ is mapped to $i$, then it will always be mapped to $j$. Using different, independent hash functions when mapping the same source symbol to different indices means that we also need to hash the symbol $m$ times.

We design an algorithm that maps each source symbol to the first $m$ coded symbols using only $O(\log m)$ computation.
The strawman solution is inefficient because we roll a dice (compare hash and $\rho(i)$) for every index $i$, even though we end up not mapping $s$ to the majority of them ($m-O(\log m)$ out of $m$), so reaching the next \emph{mapped} index takes many dice rolls ($m/O(\log m)$ on average).
Our key idea is to directly sample the distance (number of indices) to \emph{skip} before reaching the next mapped index.
We achieve it with constant cost per sample,
so we can jump from one mapped index straight to the next in constant time.

We describe our algorithm recursively. Suppose that, according to our algorithm, a source symbol $s$ has been mapped to the $i$-th coded symbol. We now wish to compute, in constant time, the \emph{next} index $j$ that $s$ is mapped to.
Let $G$ be the random variable such that $j-i = G$ for a random $s$, and 
let $P_g$ ($g \ge 1$) be the probability that $G = g$. In other words, $P_g$ is the probability that a random $s$ is not mapped to any of $i+1, i+2, \dots, i+g-1$, but is mapped to $i+g$, which are all independent events. So,
\begin{equation*}
    P_g =(1-\rho(i+1))(1-\rho(i+2))\dots(1-\rho(i+g-1))\rho(i+g).
\end{equation*}
Generating $j$ is then equivalent to sampling $g\leftarrow G$, whose distribution is described by $P_g$, and then computing $j=i+g$.

However, since there are $g$ (which can go to infinity) terms in $P_g$, it is
still unclear how to sample $G$ in constant time. The key observation is that
the cumulative mass function of $G$, denoted as $C(x)$, has a remarkably simple
form. In particular, 
\begin{equation}
    \label{eq:cx-closed}
    C(x) = \sum_{g=1}^x P_g=
1-\frac{\Gamma(i+1+\frac{1}{\alpha})\Gamma(x+i+1)}{\Gamma(i+1)\Gamma(x+i+1+\frac{1}{\alpha})}.
\end{equation}
We defer the step-by-step derivation to \S~\ref{appendix:mapping-calc}.

Let $C^{-1}(r)$ be the inverse of $C(x)$. The simple form of $C(x)$ allows us
to compute $C^{-1}(r)$ easily, which we will soon explain. To sample $G$, we
sample $r \leftarrow [0, 1)$ uniformly, and compute $g = \lceil
C^{-1}(r)\rceil$. To make the algorithm deterministic, $r$ may come from a
pseudorandom number generator seeded with the source symbol $s$.  The algorithm
outputs $i+g$ as the next index to which $s$ is mapped, updates $i \leftarrow
i+g$, and is ready to produce another index. Because every source symbol is
mapped to the first coded symbol (recall that $\rho(0)=1$), we start the
recursion with $i=0$.

Finally, we explain how to compute $C^{-1}(r)$. It is the most simple if we
set the parameter $\alpha$ in $\rho(i)$ to $0.5$. Plugging $\alpha = 0.5$ into
Eq.~\ref{eq:cx-closed}, we get $$C(x) = \frac{x(2i+x+3)}{(i+x+1)(i+x+2)}.$$ Its
inverse is 
\begin{align*}
    C^{-1}(r) &= \sqrt{\frac{(3+2i)^2-r}{4(1-r)}} -\frac{3+2i}{2}\\
              &\approx (1.5+i)((1-r)^{-\frac{1}{2}}-1).
\end{align*}
For a generic $\alpha$, we can use Stirling's approximation~\cite{stirling},
and get $$C(x) \approx 1-\left(\frac{i+1}{x+i+1}\right)^\frac{1}{\alpha}.$$
Consequently, $$C^{-1}(r) \approx (i+1)((1-r)^{-\alpha}-1).$$

In our final design, we set $\alpha = 0.5$. The main reason is that computing
$C^{-1}(r)$ when $\alpha=0.5$ only requires computing square roots, while it
otherwise involves raising $1-r$ to other non-integer powers. We
observe that the latter is significantly slower on older CPUs.
Meanwhile, as we will show in \S~\ref{sec:analysis}, setting $\alpha=0.5$
results in negligible extra communication compared to the optimal setting.

\subsection{Resistance to Malicious Workload}
\label{sec:hash-function}

In some applications, rogue users may inject items to Alice or Bob's sets. For example, in a distributed social media application where servers exchange posts, users can craft any post they like. This setting may create an ``adversarial workload,'' where the hash of the symbol representing the user's input does not distribute uniformly. 
If the user injects into Bob's set a source symbol that hashes to the same value as another source symbol that Alice has, then Bob will never be able to reconcile its set with Alice. This is because Bob will XOR the malicious symbol into the coded symbol stream it receives from Alice, but it will only cancel out the hash of Alice's colliding symbol from the \texttt{checksum} field, and will corrupt the \texttt{sum} field.

The literature on set reconciliation is aware of this issue, but typically does not specify the required properties from the hash function to mitigate it; most use hash functions with strong properties such as random oracles~\cite{multiparty}, which have long outputs (e.g., 256 bits). It is sufficient, however, to use a keyed hash function with uniform and shorter outputs (e.g., 64 bits). This allows Alice and Bob to coordinate a secret key and use it to choose a hash function from the family of keyed hashes. Although with short hashes, an attacker can computationally enumerate enough symbols to find a collision for an item that Alice has, the attacker does not know the key, i.e., the hash function that Alice and Bob use, so she cannot target a collision to one of Alice's symbols. This allows Rateless IBLT to minimize the size of a coded symbol and save bandwidth, particularly in applications where symbols are short and \texttt{checksums} account for much of the overhead. In practice, we use the SipHash~\cite{siphash} keyed hash function. A trade-off we make is that Alice has to compute the \texttt{checksums} separately for each key she uses, which increases her computation load. We believe this is a worthwhile trade-off as SipHash is very efficient, and we find in experiments (\S~\ref{sec:bench-compute}) that computing the hashes has negligible cost compared to computing \texttt{sums}, which are still universal. Also, we expect using different keys only in applications where malicious workload is a concern.

\section{Analysis}
\label{sec:analysis}
\label{analysis:decode}
\label{sec:deanalysis}

In this section, we use density evolution~\cite{density, density2} to analyze
the behavior of the peeling decoder when decoding coded symbols in Rateless
IBLTs. We mathematically prove that as the difference size $d$ goes to
infinity, the \emph{overhead} of Rateless IBLTs converges to $1.35$; i.e.,
reconciling $d$ differences requires only the first $1.35d$ coded symbols. We
then use Monte Carlo simulations to show the behavior for finite $d$. In
particular, we show that the overhead converges quickly, when $d$ is at the low
hundreds.

Density evolution is a standard technique for analyzing the iterative decoding
processes in error correcting codes based on sparse
graphs~\cite{density,density2}, and has been applied to IBLTs with simpler
mappings between source and coded symbols~\cite{irregular}. Its high-level idea
is to iteratively compute the probability that a random source symbol has
\emph{not} been recovered while simulating the peeling decoder statistically.
If this probability keeps decreasing towards $0$ as the peeling decoder runs
for more iterations, then decoding will succeed with probability converging to
$1$~\cite[\S~2]{density2}. The following theorem states our main conclusion.
We defer its proof to \S~\ref{sec:proofs}. 

\begin{theorem}
\label{detheorem}For a random set of $n$ source symbols, the probability that
    the peeling decoder successfully recovers the set using the first $\eta n$
    coded symbols (as defined in \S~\ref{sec:sequence}) tends to 1 as $n$ goes to
    infinity, provided that $\eta$ is any positive constant that satisfies
    \begin{equation}
    \label{eq:dethreshold}\forall q \in (0, 1]: e^{\frac{1}{\alpha}\text{Ei}\left(-\frac{q}{\alpha \eta}\right)} < q.
    \end{equation}
\end{theorem}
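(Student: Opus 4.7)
The plan is to analyze the peeling decoder via density evolution, tracking the probability $q_\ell$ that a uniformly random source symbol remains unrecovered after $\ell$ rounds of peeling. I condition on the standard tree-like neighborhood assumption, which holds asymptotically for the sparse random bipartite graph induced by the mapping rule, so that the statuses of distinct source symbols sharing a coded symbol are independent at each round. For a coded symbol at index $i$, its degree is approximately Poisson with mean $n\rho(i)$, since each of the $n$ source symbols is mapped to it independently with probability $\rho(i)$. The probability that all \emph{other} source symbols mapped into coded symbol $i$ have already been recovered by round $\ell$ is then $e^{-n\rho(i)q_\ell}$. A source symbol $s$ remains unrecovered at round $\ell+1$ iff, for every $i$, either $s$ is not mapped to $i$ (probability $1-\rho(i)$) or $i$ fails to peel $s$ (conditional probability $1-e^{-n\rho(i)q_\ell}$). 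Multiplying over the $m=\eta n$ independent indices and taking logs yields
\begin{equation*}
\ln q_{\ell+1} = \sum_{i=0}^{m-1} \ln\!\left(1 - \rho(i)\, e^{-n\rho(i)q_\ell}\right).
\end{equation*}

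Next I pass to the $n\to\infty$ limit. For large $n$, $\rho(i)=1/(1+\alpha i)$ is small outside a vanishing initial range, so $\ln(1-x)\approx -x$ gives
\begin{equation*}
\ln q_{\ell+1} \sim -\sum_{i=0}^{m-1}\rho(i)\,e^{-n\rho(i)q_\ell} \sim -\int_0^{\eta n}\frac{1}{1+\alpha t}\,\exp\!\left(-\frac{nq_\ell}{1+\alpha t}\right)dt.
\end{equation*}
Substituting $v = nq_\ell/(1+\alpha t)$ converts this integral into $\frac{1}{\alpha}\int_{q_\ell/(\alpha\eta)}^{nq_\ell}\frac{e^{-v}}{v}\,dv$, which as $n\to\infty$ tends to $\frac{1}{\alpha}E_1\!\left(q_\ell/(\alpha\eta)\right) = -\frac{1}{\alpha}\text{Ei}\!\left(-q_\ell/(\alpha\eta)\right)$. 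Exponentiating gives the limiting recursion
\begin{equation*}
q_{\ell+1} \;=\; \exp\!\left(\frac{1}{\alpha}\,\text{Ei}\!\left(-\frac{q_\ell}{\alpha\eta}\right)\right).
\end{equation*}
Starting from $q_0=1$ (no symbol is recovered initially), hypothesis \eqref{eq:dethreshold} says exactly that $q_{\ell+1}<q_\ell$ for every $q_\ell\in(0,1]$; since the right-hand side is continuous in $q$ and admits no fixed point in $(0,1]$, the sequence $(q_\ell)$ is monotone decreasing and its limit must be $0$.

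Finally I must translate ``vanishing fraction unrecovered'' into ``everything recovered with probability tending to $1$.'' A standard concentration step, via an Azuma/Doob martingale on the peeling process in the style of Luby--Mitzenmacher--Shokrollahi--Spielman, shows that the actual fraction of unrecovered symbols is within $o(1)$ of $q_\ell$ with probability $1-o(1)$. An expansion-style argument is then needed to rule out a small residual stopping set: one shows that with high probability no subset of moderate size has all of its neighboring coded symbols internal to itself, forcing the residual to be empty once the density evolution has driven the unrecovered fraction below an appropriate threshold. I expect this last step—bootstrapping from a small unrecovered fraction to none—to be the main technical obstacle, since it requires a tail estimate on subgraph expansion that accounts for the heavy-tailed coded-symbol degrees induced by $\rho(i)=1/(1+\alpha i)$, whereas the textbook expansion lemma is stated for regular or bounded-degree ensembles; careful control of the integral approximation of the sum near $i=0$, where $\rho(i)$ is $\Theta(1)$, is also required to make the limit in step two rigorous.
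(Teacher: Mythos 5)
Your proposal is correct and follows essentially the same route as the paper: a density-evolution analysis whose limiting recursion is exactly the paper's $f(q)=e^{\frac{1}{\alpha}\mathrm{Ei}\left(-\frac{q}{\alpha\eta}\right)}$, with the condition in Eq.~\ref{eq:dethreshold} ruling out fixed points in $(0,1]$. The only differences are bookkeeping and rigor at the margins: you write the recursion from the node perspective while the paper uses edge-perspective degree generating functions (with Le Cam's theorem for the Poisson approximation and a squeeze-theorem argument to handle the sum near $i=0$ where $\rho(i)=\Theta(1)$), and both treatments ultimately delegate the concentration and stopping-set step you flag as the main obstacle to the standard density-evolution literature.
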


Recall that $\text{Ei}(\cdot)$ is the exponential integral
function;\footnote{$\text{Ei}(x) = -\int_{-x}^{\infty}\frac{e^{-t}dt}{t}$.}
$\alpha$ is the parameter in the mapping probability $\rho(i)=\frac{1}{1+\alpha
i}$ as discussed in \S~\ref{sec:sequence}. We stated Theorem~\ref{detheorem}
with respect to a generic set of source symbols and its corresponding coded
symbol sequence; in practice, the set is $A \bigtriangleup B$. The decoder
(Bob) knows the coded symbol sequence for $A \bigtriangleup B$ because he
subtracts the coded symbols for $B$ (generated locally) from those for $A$
(received from Alice) as defined in \S~\ref{sec:background}. 

Theorem~\ref{detheorem} implies that for any choice of parameter $\alpha$,
there exists a corresponding threshold $\eta^*$ which is the smallest $\eta$
that satisfies Eq.~\ref{eq:dethreshold}. Any $\eta > \eta^*$ also satisfies
Eq.~\ref{eq:dethreshold} because the left-hand side monotonically decreases
with respect to $\eta$. (Intuitively, this must be true as a larger $\eta$
means more coded symbols, which should be strictly beneficial for decoding.) As
long as Bob receives more than $\eta^*$ coded symbols per source symbol, he can
decode with high probability. In other words, $\eta^*$ is the communication
\emph{overhead} of Rateless IBLTs, i.e., the average number of coded symbols
required to recover each source symbol. $\eta^*$ is a function of $\alpha$. As
discussed in \S~\ref{sec:mapping}, we set $\alpha=0.5$ in our final design to
simplify the process of generating mappings according to $\rho(i)$. We solve
for $\eta^*$ when $\alpha=0.5$ and get the following result.

\begin{corollary}
\label{thm:ribltoverhead}The average overhead of Rateless IBLTs converges to
    $1.35$ as the difference size $d=|A \bigtriangleup B|$ goes to infinity.
\end{corollary}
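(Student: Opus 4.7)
The plan is to specialize Eq.~(\ref{eq:dethreshold}) to $\alpha = 1/2$, find the smallest admissible $\eta$ via a critical-point analysis, and evaluate the resulting expression numerically. Define $g_\eta(q) := \ln q - \tfrac{1}{\alpha}\mathrm{Ei}\!\bigl(-q/(\alpha\eta)\bigr)$; the condition of Theorem~\ref{detheorem} is equivalent to $g_\eta(q) > 0$ for every $q \in (0,1]$, so the overhead threshold $\eta^*$ is the infimum of $\eta$ for which this inequality holds uniformly on $(0,1]$.

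I first argue that the minimum of $g_\eta$ over $(0,1]$ is attained at an interior point. Using the series expansion $\mathrm{Ei}(x) = \gamma + \ln|x| + O(x)$ near zero, the term $-\tfrac{1}{\alpha}\mathrm{Ei}(-q/(\alpha\eta))$ behaves like $-\tfrac{1}{\alpha}\ln q + O(1)$ as $q \to 0^+$, which for $\alpha = 1/2$ dominates the $+\ln q$ term and forces $g_\eta(q) \to +\infty$. At the right endpoint $q = 1$, $g_\eta(1) = -\tfrac{1}{\alpha}\mathrm{Ei}(-1/(\alpha\eta)) > 0$ since $\mathrm{Ei}$ is negative on the negative real axis. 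Thus at $\eta = \eta^*$ the minimum is attained at some interior $q^* \in (0,1)$ satisfying the simultaneous conditions $g_{\eta^*}(q^*) = 0$ and $g_{\eta^*}'(q^*) = 0$.

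Using $\mathrm{Ei}'(x) = e^x/x$, I compute $g_\eta'(q) = \tfrac{1}{q}\bigl(1 - \tfrac{1}{\alpha}e^{-q/(\alpha\eta)}\bigr)$. The first-order condition yields $e^{-q^*/(\alpha\eta^*)} = \alpha$, hence $q^* = \alpha\eta^*\ln(1/\alpha)$. Substituting into $g_{\eta^*}(q^*) = 0$ reduces the problem to a single equation in $\eta^*$:
\[
\ln\!\bigl(\alpha\eta^*\ln(1/\alpha)\bigr) \;=\; \tfrac{1}{\alpha}\,\mathrm{Ei}(\ln\alpha).
\]
With $\alpha = 1/2$ this rearranges to $\eta^* = (2/\ln 2)\,\exp\!\bigl(2\,\mathrm{Ei}(-\ln 2)\bigr)$, and substituting the standard value $\mathrm{Ei}(-\ln 2) \approx -0.3789$ gives $\eta^* \approx 1.35$, matching the claim.

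The main obstacle I expect lies in the word ``converges'' in the corollary. Theorem~\ref{detheorem} provides only a sufficient condition, so it directly bounds the expected overhead from above by $\eta^* + o(1)$. To obtain a matching lower bound, I would invoke the converse direction of density evolution: for any $\eta < \eta^*$, there exists $q \in (0,1]$ with $g_\eta(q) < 0$, which in the density-evolution recursion corresponds to the peeling trajectory hitting a fixed point at a strictly positive fraction of unrecovered symbols. Standard concentration around the density-evolution trajectory (e.g., Wormald's differential-equation method used in the density evolution literature cited in Theorem~\ref{detheorem}) upgrades this to failure with high probability, so $(\eta^* - \varepsilon)d$ symbols cannot suffice for any $\varepsilon > 0$. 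Together with the upper bound, this yields convergence of the average overhead to $\eta^* \approx 1.35$.
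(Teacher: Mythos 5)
Your derivation is correct and matches what the paper does implicitly: the paper's "proof" of the corollary is just the one-line remark that it "solves for $\eta^*$" in Eq.~\ref{eq:dethreshold} with $\alpha=0.5$, and your tangency conditions $g_{\eta^*}(q^*)=0$, $g'_{\eta^*}(q^*)=0$ are the standard (and essentially the only) way to carry that out, with the closed form $\eta^*=\tfrac{2}{\ln 2}e^{2\,\mathrm{Ei}(-\ln 2)}\approx 1.353$ checking out numerically. Your observation that Theorem~\ref{detheorem} only gives a sufficient condition, so that "converges to" strictly requires the density-evolution converse for the matching lower bound, is a legitimate point that the paper glosses over; your sketch of how to close it is the right one.
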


\subsection{Monte Carlo Simulations}
\label{analysis:sim}

Theorem~\ref{detheorem} and Corollary~\ref{thm:ribltoverhead} from the density
evolution analysis state the behavior of Rateless IBLTs when the difference size
$d$ goes to infinity. To understand the behavior when $d$ is finite, we run
Monte Carlo simulations, and compare the results with the theorems.

Fig.~\ref{fig:parameter-alpha} shows the main results. It compares the overhead
predicted by Theorem~\ref{detheorem} and that observed in simulations. First,
notice that as the difference size increases, simulation results converge to
the analysis for all $\alpha$. How fast the results converge depends on
$\alpha$. For all $\alpha \le 0.55$, convergence happens quickly, and the
overhead observed in simulations stays within $10\%$ of the analysis even for
the smallest difference size we test. On the other hand, for $\alpha=0.95$,
simulation results are still $12\%$ higher than the analysis at the largest
difference size we test. Second, the figure shows that setting $\alpha=0.5$ is
close to optimal for the communication overhead. Setting $\alpha=0.5$ results
in $\eta^*=1.35$, while the optimal setting is $\alpha=0.64$ which results in
$\eta^*=1.31$, a difference of only $3\%$.

\begin{figure}
\centering
\includegraphics{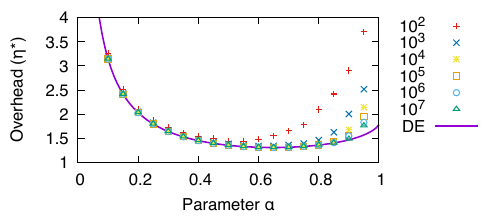}
    \caption[Relationship between the communication overhead $\eta^*$ and the
    parameter $\alpha$ in Rateless
    IBLTs.]{\label{fig:parameter-alpha}Relationship between the communication
    overhead $\eta^*$ and the parameter $\alpha$ in $\rho(i)$. ``DE'' shows 
    results from the density evolution analysis which assumes the difference
    size goes to infinity. Points show results from Monte Carlo simulations for various
    finite difference sizes. Each point is the average over 100
    runs.}
\end{figure}

Next, we focus on $\alpha=0.5$, the parameter we choose for our final design.
Fig.~\ref{fig:sim-overhead} shows the overhead as we vary the difference size
$d$. It peaks at $1.72$ when $d=4$ and then converges to $1.35$ as predicted by
Corollary~\ref{thm:ribltoverhead}. Convergence happens quickly: for all
$d>128$, the overhead is less than $1.40$.

\begin{figure}
\centering
\includegraphics{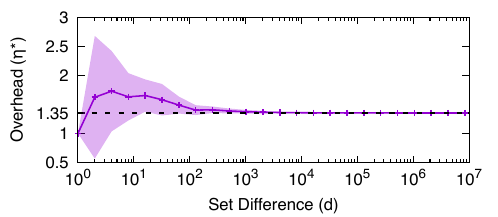}
    \caption[Overhead of Rateless IBLTs when reconciling differences of various
    sizes.]{\label{fig:sim-overhead}Overhead of Rateless IBLTs at varying
    difference sizes $d$. We run 100 simulations for each data point and report
    the average. The shaded area shows the standard deviation. The dashed line
    shows $1.35$, the overhead predicted by density evolution.}
\end{figure}

The density evolution analysis also predicts how decoding progresses as the
decoder receives more coded symbols. The fixed points of $q$ in
Eq.~\ref{eq:dethreshold} represent the expected fraction of source symbols that
the peeling decoder \emph{fails} to recover before stalling, as $d$ goes to
infinity. Fig.~\ref{fig:sim-process} compares this result with simulations (we
plot $1-q$, the fraction that the decoder \emph{can} recover) and they match
closely. There is a sharp increase in the fraction of recovered source symbols
towards the end, a behavior also seen in other codes that use the peeling
decoder, such as LT codes~\cite{lt}.

\begin{figure}
\centering
\includegraphics{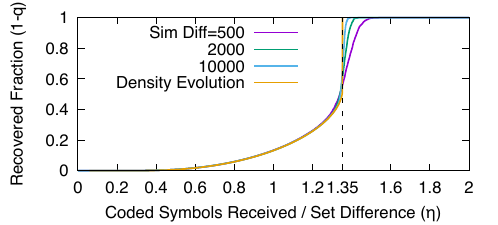}
    \caption[Fraction of recovered source symbols as the peeling decoder
    receives more coded symbols in Rateless IBLTs.]{\label{fig:sim-process}The
    fraction of recovered source symbols after receiving different number of
    coded symbols (normalized over the difference size $d$), as observed in
    simulations (average of 1000 runs), and as predicted by density evolution.
    Dashed line shows $1.35$, the overhead predicted by density evolution.}
\end{figure}

\section{Implementation}
\label{sec:implementation}

We implement Rateless IBLT as a library in 353 lines of Go code. The implementation is self-contained and does not use third-party code. In this section, we discuss some important optimizations in the implementation.

\smallskip
\noindent\textbf{Efficient incremental encoding.} A key feature of Rateless IBLT is that it allows Alice to generate and send coded symbols one by one until Bob can decode.
Suppose that Alice has generated coded symbols until index $i-1$, and now wishes to generate the $i$-th coded symbol. She needs to quickly find the source symbols that are mapped to it. A strawman solution is to store alongside each source symbol the next index it is mapped to, and scan all the source symbols to find the ones mapped to $i$. However, this takes $O(|A|)$ time. In our implementation, we store pointers to source symbols in a \emph{heap}. It implements a priority queue, where the priority is the index of the next coded symbol that a source symbol is mapped to. A smaller value indicates higher priority. This ensures that source symbols used for generating the next coded symbol are always at the head of the queue so that the encoder can access them efficiently without scanning all the source symbols.

\smallskip
\noindent\textbf{Variable-length encoding for \texttt{count}.} Recall that the \texttt{count} field stores the number of source symbols that are mapped to a coded symbol during encoding. The standard approach is to allocate a fixed number of bytes for it~\cite{difference,rate-compatible}, which inflates the size of each coded symbol by a constant amount. However, in Rateless IBLT, the value stored in \texttt{count} decreases with the index of the coded symbol according to a known pattern: the $i$-th coded symbol for a set $S$ is expected to have a \texttt{count} of $|S|\rho(i)$. This pattern allows us to aggressively compress the \texttt{count} field. Instead of storing the value itself, we can store the \emph{difference} of the actual value and the aforementioned expected value, which is a much smaller number. The node receiving the coded symbol can reconstruct the actual value of \texttt{count}, because it knows $N$ (transmitted with the $0$-th coded symbol) and $i$ (assuming a transport that preserves ordering). Instead of allocating a fixed number of bytes, we use variable-length quantity~\cite{vlq-encoding} to store the difference, which uses $\lceil\log_{128}{x}\rceil$ bytes to store any number $x$. Using our approach, the \texttt{count} field takes only $1.05$ bytes per coded symbol on average when encoding a set of $10^6$ items into $10^4$ coded symbols, keeping the resulting communication cost to a minimum.

\section{Evaluation}
\label{sec:bench}

We compare Rateless IBLT with state-of-the-art set reconciliation schemes, and demonstrate its low communication~(\S~\ref{sec:bench-communication}) and computation~(\S~\ref{sec:bench-compute}) costs across a wide range of workloads (set sizes, difference sizes, and item lengths). We then apply Rateless IBLT to synchronize the account states of Ethereum and demonstrate significant improvements over the production system on real workloads~(\S~\ref{sec:eval}). 

\smallskip
\noindent\textbf{Schemes compared.} We compare with regular IBLT~\cite{iblt,difference}, MET-IBLT~\cite{rate-compatible}, PinSketch~\cite{pinsketch}, and Merkle tries~\cite{merkle}. For Rateless IBLT, we use our implementation discussed in \S~\ref{sec:implementation}. For regular IBLT and MET-IBLT, we implement each scheme in Python. We use the recommended parameters~\cite[\S~6.1]{difference}\cite[\S\S~V-A, V-C]{rate-compatible}, and allocate 8 bytes for the \texttt{checksum} and the \texttt{count} fields, respectively. For PinSketch, we use Minisketch~\cite[\S~6]{erlay}, a state-of-the-art implementation~\cite{minisketch} written in C++ and deployed in Bitcoin. For Merkle tries, we use the implementation in Geth~\cite{geth}, the most popular client for Ethereum. 

\subsection{Communication Cost}
\label{sec:bench-communication}

We first measure the communication \emph{overhead}, defined as the amount of data transmitted during reconciliation divided by the size of set difference accounted in bytes. 
We test with set differences of $1$--$400$ items. Beyond $400$, the overhead of all schemes stays stable.
The set size is 1 million items (recall that it only affects Merkle trie's communication cost). Each item is 32 bytes, the size of a SHA256 hash, commonly used as keys in open-permission distributed systems~\cite{bitcoin,ipfs}. For Rateless IBLT and MET-IBLT, we generate coded symbols until decoding succeeds, repeat each experiment $100$ times, and then report the average overhead and the standard deviation. Regular IBLTs cannot be dynamically expanded, and tuning the number of coded symbols $m$ requires precise knowledge of the size of the set difference. Usually, this is achieved by sending an estimator before reconciliation~\cite{difference}, which incurs an extra communication cost of at least 15 KB according to the recommended setup~\cite[\S~V-C]{rate-compatible}. We report the overhead of regular IBLT with and without this extra cost. Also, unlike the other schemes, regular IBLTs may fail to decode probabilistically. We gradually increase the number of coded symbols $m$ until the decoding failure rate drops below $1/3\,000$. %

Fig.~\ref{fig:overhead} shows the overhead of all schemes except for Merkle trie, whose overhead is significantly higher than the rest at over $40$ across all difference sizes we test.
Rateless IBLT consistently achieves lower overhead compared to regular IBLT and MET-IBLT, especially when the set difference is small. For example, the overhead is $2$--$4\times$ lower when the set difference is less than $50$.
The improvement is more significant when considering the cost of the estimator for regular IBLTs. 
On the other hand, PinSketch consistently achieves an overhead of $1$, which is $37$--$60\%$ lower than Rateless IBLT. However, as we will soon show, Rateless IBLT incurs $2$--$2\,000\times$ less computation than PinSketch on both the encoder and the decoder. We believe that the extra communication cost is worthwhile in most applications for the significant reduction in computation~cost.

\begin{figure}
\centering
\includegraphics{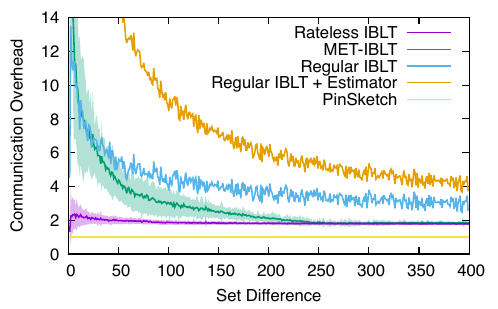}
\caption{\label{fig:overhead} Communication overhead of various schemes. Each item is 32 bytes. Shaded areas show the standard deviation for Rateless IBLT and MET-IBLT. Regular IBLT + Estimator shows the overhead of regular IBLT with an estimator for the size of set difference. We do not plot for Merkle Trie as its overhead is significantly higher (over 40) than the rest.}
\end{figure}

\smallskip
\noindent\textbf{Scalability of Rateless IBLT.} We quickly remark on how Rateless IBLT's communication cost scales to 
longer or shorter items.
Like other schemes based on sparse graphs, the \texttt{checksum} and \texttt{count} fields add a constant cost to each coded symbol. For Rateless IBLT, these two fields together occupy about $9$ bytes. Longer items will better amortize this fixed cost.
When reconciling shorter items, this fixed cost might become more prominent. However, it is possible to reduce the length of the \texttt{checksum} field if the differences are smaller, because there will be fewer opportunities for hash collisions. We found that hashes of $4$ bytes are enough to reliably reconcile differences of tens of thousands. It is also possible to remove the \texttt{count} field altogether; Bob can still recover the symmetric difference as the peeling decoder (\S~\ref{sec:background}) does not use this field. 

\subsection{Computation Cost}
\label{sec:bench-compute}

\begin{figure*}
    \centering
    \begin{minipage}[t]{.666\textwidth}
        \centering
        \begin{subfigure}[t]{0.5\textwidth}
         \centering
         \includegraphics[width=\textwidth]{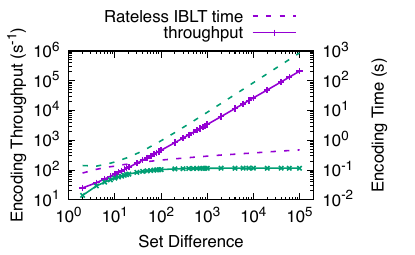}
         \caption{$N=1\,000\,000$}
         \label{fig:encoding-1m}
     \end{subfigure}%
     \hfill
     \begin{subfigure}[t]{0.5\textwidth}%
         \centering%
         \includegraphics[width=\textwidth]{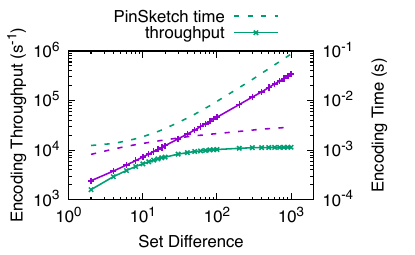}
         \caption{$N=10\,000$}
         \label{fig:encoding-10k}
     \end{subfigure}
     \captionsetup{width=.9\linewidth}
        \caption{Encoding throughput and time for sets of sizes $N=1\,000\,000$ and $N=10\,000$. Solid lines show the throughput (left Y-axis), and dashed lines show the encoding time (right Y-axis).}%
        \label{fig:encoding}
    \end{minipage}%
    \hfill
    \begin{minipage}[t]{0.333\textwidth}
        \centering
        \captionsetup{width=.9\linewidth}
        \includegraphics[width=\textwidth]{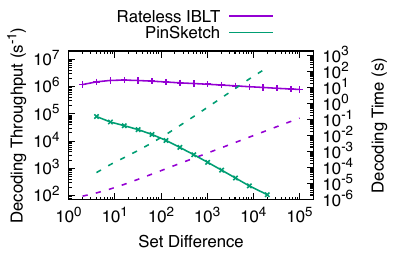}
        \caption{Decoding throughput and time. Solid lines show the throughput (left Y-axis), and dashed lines show the decoding time (right Y-axis).}
        \label{fig:decoding}
    \end{minipage}%
\end{figure*}

There are two potential computation bottlenecks in set reconciliation: encoding the sets into coded symbols, and decoding the coded symbols to recover the symmetric difference. Encoding happens at Alice, and both encoding and decoding happen at Bob. In this experiment, we measure the encoding and decoding throughput for sets of various sizes and differences. We focus on comparing with PinSketch. We fix the item size to $8$ bytes, because this is the maximum size that the PinSketch implementation supports. We do not compare with regular IBLT or MET-IBLT as we cannot find high-quality open source implementations, and they have similar complexity as Rateless IBLT.\footnote{The complexity is linear to the average number of coded symbols each source symbol is mapped to.%
This is $O(\log (m))$ for Rateless IBLT and MET-IBLT~\cite{rate-compatible}, and constant for regular IBLT, where $m$ is the number of coded symbols. However, the cost is amortized over the size of the set difference, which is $O(m)$. So, in all three IBLT-based schemes, the cost to encode for \emph{each} set difference decreases quickly as $m$ increases. } We will compare with Merkle trie in \S~\ref{sec:eval}.
We run the benchmarks on a server with two Intel Xeon E5-2697 v4 CPUs. Both Rateless IBLT and PinSketch are single-threaded, and we pin the executions to one CPU core using \texttt{cpuset(1)}.

\smallskip
\noindent\textbf{Encoding.} Fig.~\ref{fig:encoding} shows in solid lines the encoding throughput, defined as the difference size divided by the time it takes for the encoder to generate enough coded symbols for successful reconciliation. It indicates the number of items that can be reconciled per second with a compute-bound encoder. Rateless IBLT achieves $2$--$2000\times$ higher encoding throughput than PinSketch when reconciling differences of $2$--$10^5$ items. The significant gain is because the mapping between source and coded symbols is \emph{sparse} in Rateless IBLT, and the sparsity increases rapidly with $m$, so the average cost to generate a coded symbol decreases quickly. In comparison, generating a coded symbol in PinSketch always requires evaluating the entire characteristic polynomial, causing the throughput to converge to a constant. 

As the difference size increases, the encoding throughput of Rateless IBLT increases \emph{almost} linearly, enabling the encoder to scale to large differences. In Fig.~\ref{fig:encoding}, we plot in dashed lines the time it takes to finish encoding. As the difference size increases by $50\,000\times$, the encoding time of Rateless IBLT grows by less than $6\times$. Meanwhile, the encoding time of PinSketch grows by $5\,000\times$.

\smallskip
\noindent\textbf{Decoding.} Fig.~\ref{fig:decoding} shows the decoding throughput (solid lines) and time (dashed lines), defined similarly as in the encoding experiment. We do not make a distinction of the set size, because it does not affect the decoding complexity. (Recall that decoders operate on coded symbols of the symmetric difference only.) Rateless IBLT achieves $10$--$10^7\times$ higher decoding throughput than PinSketch. This is because decoding PinSketch is equivalent to interpolating polynomials~\cite{pinsketch}, which has $O(m^2)$ complexity~\cite{minisketch}, while decoding Rateless IBLT has only $O(m\log(m))$ complexity thanks to the sparse mapping between source and coded symbols. As the difference size grows by $50\,000\times$, the decoding throughput of Rateless IBLT drops by only $34\%$, allowing it to scale to large differences. For example, it takes Rateless IBLT $0.01$ second to decode $10^5$ differences. In contrast, it takes PinSketch more than a minute to decode $10^4$ differences.

\begin{figure}
    \begin{minipage}[t]{0.5\columnwidth}
        \centering
        \captionsetup{width=.9\linewidth}
        \includegraphics[width=\textwidth]{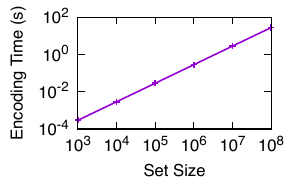}
        \caption{Encoding time of $1\,000$ differences and varying set size $N$.}
        \label{fig:encode-vary-set}
    \end{minipage}%
    \hfill
    \begin{minipage}[t]{0.5\columnwidth}
        \centering
        \captionsetup{width=.9\linewidth}
        \includegraphics[width=\textwidth]{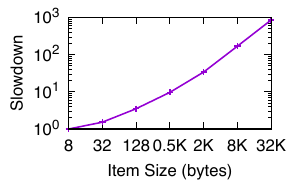}
        \caption{Slowdown when encoding items of different sizes.}
        \label{fig:field-size}
    \end{minipage}%
\end{figure}

\smallskip
\noindent\textbf{Scalability of Rateless IBLT.} We now show that Rateless IBLT preserves its computation efficiency when scaling to larger sets, larger differences, and longer items.

The set size $N$ affects encoding, but not decoding, because the decoder operates on coded symbols that represent the symmetric difference. The computation cost of encoding grows linearly with $N$, as each source symbol is mapped to the same number of coded symbols on average and thus adds the same amount of work. For example, in Fig.~\ref{fig:encoding}, the encoding time for $10^3$ differences is $2.9$ milliseconds when $N=10^4$, and $294$ milliseconds when $N=10^6$, a difference of $100\times$ that matches the change in $N$. Fig.~\ref{fig:encode-vary-set} shows the encoding time measured in experiments with the same configuration for a wider range of $N$.

The difference size $d$ affects both encoding and decoding. Recall that Rateless IBLT uses about $1.35d$ coded symbols to reconcile $d$ differences (\S~\ref{sec:analysis}). As $d$ increases, the encoder needs to generate more coded symbols. However, unlike PinSketch where the cost is linear in $d$, the cost of Rateless IBLT grows logarithmically. For example, in Fig.~\ref{fig:encoding-1m}, the encoding time grows by only $6\times$ as the set difference increases from $1$ to $10^5$. This is because the mapping from source to coded symbols is sparse: each source symbol is only mapped to an average of $O(\log d)$ coded symbols.
The same result applies to decoding.
For example, in Fig.~\ref{fig:decoding}, the decoding throughput drops by $2\times$ as the $d$ grows by $10^4\times$.

The item size $\ell$ affects both encoding and decoding because it decides the time it takes to compute the XOR of two symbols, which dominates the computation cost in Rateless IBLT. Fig.~\ref{fig:field-size} shows the relative slowdown when as $\ell$ grows from $8$ bytes to $32$ KB. Initially, the slowdown is sublinear (e.g., less than $4\times$ when $\ell$ grows by $16\times$ from $8$ to $128$ bytes) because the other costs that are independent of $\ell$ (e.g., generating the mappings) are better amortized. However, after $2$ KB, the slowdown becomes linear. This implies that the data rate at which the encoder can process source symbols, measured in bytes per second, stays constant. For example, when encoding for $d=1000$, the encoder can process source symbols at $124.8$ MB/s. The same analysis applies to decoding. In comparison, the encoding complexity of PinSketch increases linearly with $\ell$, and the decoding complexity increases quadratically~\cite{pinsketch,minisketch}.

\subsection{Application}
\label{sec:eval}

We now apply Rateless IBLT to a prominent application, the Ethereum blockchain.
Whenever a blockchain replica comes online, it must synchronize with others to get the latest \emph{ledger state} before it can validate new transactions or serve user queries.
The ledger state is a key-value table, where the keys are 20-byte wallet addresses, and the values are 72-byte account states such as its balance.
There are 230 million accounts as of January 4, 2024. Synchronizing the ledger state is equivalent to reconciling the set of all key-value pairs, a natural application of Rateless IBLT.

Ethereum (as well as most other blockchains) currently uses Merkle tries (\S~\ref{sec:related}) to synchronize ledger states between replicas. It applied a few optimizations: using a 16-ary trie instead of a binary one, and shortening sub-tries that have no branches. The protocol is called \emph{state heal} and has been deployed in Geth~\cite{geth}, the implementation which accounts for 84\% of all Ethereum replicas~\cite{geth-share}. Variants of Geth also power other major blockchains, such as Binance Smart Chain and Optimism. 

State heal retains the issues with Merkle tries despite the optimizations. To discover a differing key-value pair (leaf), replicas must visit and compare every internal node on the branch from the root to the differing leaf. This amplifies the communication, computation, and storage I/O costs by as much as the depth of the trie, i.e., $O(\log(N))$ for a set of $N$ key-value pairs.
In addition, replicas must descend the branch in lock steps, so the process takes $O(\log(N))$ round trips.
As a result, some Ethereum replicas have reported spending weeks on state heal, e.g.,~\cite{geth-stuck}.
In comparison, Rateless IBLT does not have these issues. Its communication and computation costs depend only on the size of the difference rather than the entire ledger state, and it requires no interactivity between replicas besides streaming coded symbols at line rate.

\smallskip
\noindent\textbf{Setup.}
We compare state heal with Rateless IBLT in synchronizing Ethereum ledger states.
We implement a prototype in 1,903 lines of Go code.
The prototype is able to load a snapshot of the ledger state from the disk, and synchronize with a peer over the network using either scheme. For state heal, we use the implementation~\cite{geth-impl} in Geth v1.13.10
without modification. For Rateless IBLT, we use our implementation discussed in \S~\ref{sec:implementation}. We wrap it with a simple network protocol where a replica requests synchronization by opening a TCP connection to the peer, and the peer streams coded symbols until the requesting replica closes the connection to signal successful decoding.  

To obtain workload for experiments, we extract snapshots of Ethereum ledger states as of blocks 18908312--18938312, corresponding to a 100-hour time span between December 31, 2023 and January 4, 2024. Each snapshot represents the 
ledger state when a block was just produced in the live Ethereum blockchain.\footnote{Ethereum produces a block every 12 seconds. Each block is a batch of transactions that update the ledger state.} For each experiment, we set up two replicas: Alice always loads the latest snapshot (block 18938312); Bob loads snapshots of different staleness and synchronizes with Alice.
This simulates the scenario where Bob goes offline at some point in time (depending on the snapshot he loads), wakes up when block 18938312 was just produced, and synchronizes with Alice to get the latest ledger state.
We run both replicas on a server with two Intel Xeon E5-2698 v4 CPUs running FreeBSD 14.0.
We use Dummynet~\cite{dummynet} to inject a 50 ms one-way propagation delay between the replicas and enforce different bandwidth caps of 10 to 100 Mbps.

\smallskip
\noindent\textbf{Results.}
We first vary the state snapshot that Bob loads and measure the completion time and the communication cost for Bob to synchronize with Alice.  
We fix the bandwidth to 20 Mbps. Fig.~\ref{fig:sync-distance} shows the results.
As Bob's state becomes more stale, more updates happen between his and the latest states, and the difference between the two grows linearly.
As a result, the completion time and the communication cost of both schemes
increase linearly.  Meanwhile, Rateless IBLT consistently achieves
$4.8$--$13.6\times$ lower completion time, and $4.4$--$8.6\times$ lower
communication cost compared to state heal. As discussed previously, state heal
has a much higher communication cost because it requires transmitting the
differing internal nodes of the Merkle trie in addition to the leaves. For
example, this amplifies the number of trie nodes transmitted by $3.6\times$
when Bob's state is 30 hours stale.  The higher communication cost leads to
proportionately longer completion time as the system is throughput-bound.

\begin{figure}
    \centering
    \begin{subfigure}[t]{\columnwidth}
        \centering
        \includegraphics{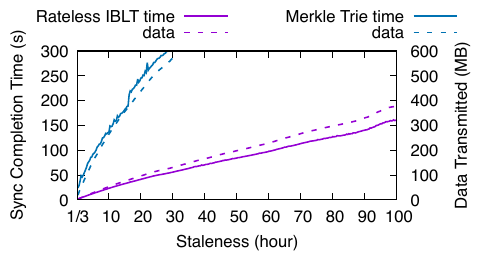}
        \caption{Staleness between 20 minutes and 100 hours.}
        \label{fig:sync-large}
    \end{subfigure}
    \begin{subfigure}[t]{\columnwidth}%
        \centering%
        \includegraphics{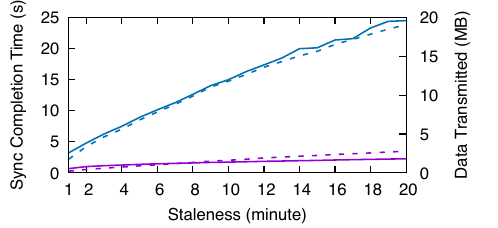}
        \caption{Staleness between 1 minute and 20 minutes.}
        \label{fig:sync-small}
    \end{subfigure}
    \caption{\label{fig:sync-distance} Completion time and communication cost when synchronizing Ethereum ledger states at different staleness over a network link with 50 ms of propagation delay and 20 Mbps of bandwidth. A staleness of $x$ hours means the state is $x$-hours old when synchronization starts.}
\end{figure}%

In our experiments, state heal requires at least 11 rounds of interactivity, as
Alice and Bob descend from the roots of their tries to the differing leaves in
lock steps. Rateless IBLT, in comparison, only requires half of a round because
Alice streams coded symbols without waiting for any feedback. This advantage is
the most obvious when reconciling a small difference, where the system would be
latency-bound.  For example, Rateless IBLT is $8.2\times$ faster than state
heal when Bob's ledger state is only 1 block (12 seconds) stale.  

We quickly highlight the impact of interactivity.  Fig.~\ref{fig:sync-trace}
shows traces of bandwidth usage when synchronizing one block worth of state
difference. For Rateless IBLT, the first coded symbol arrives at Bob in 1
round-trip time (RTT) after his TCP socket opens (0.5 RTT for TCP \texttt{ACK}
to reach Alice, and another 0.5 RTT for the first symbol to arrive). Subsequent
symbols arrive at line rate, as the peak at 1 RTT indicates. In comparison, for
state heal, Alice and Bob reach the bottom of their tries after 11 RTTs; before
that, they do not know the actual key-value pairs that differ, and the network
link stays almost idle.

\begin{figure}
\centering
\includegraphics{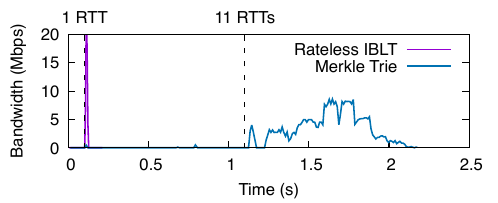}
    \caption[Bandwidth traces when synchronizing Ethereum ledger states]{\label{fig:sync-trace} Time series of bandwidth usage when synchronizing Ethereum ledger states that are 1 block (12 seconds) stale. The network link has 50 ms of propagation delay and 20 Mbps of bandwidth. Time starts when Bob sees the TCP socket open.}
\end{figure}

Finally, we demonstrate that Rateless IBLT consistently outperforms state heal across different network conditions. We fix Bob's snapshot to be 10 hours stale and vary the bandwidth cap. Fig.~\ref{fig:sync-bandwidth} shows the results. Rateless IBLT is $4.8\times$ faster than state heal at 10 Mbps, and the gain increases to $16\times$ at 100 Mbps. Notice that the completion time of state heal stays constant after 20 Mbps; it cannot utilize any extra bandwidth. We observe that state heal becomes \emph{compute-bound}: Bob cannot process the trie nodes he receives fast enough to saturate the network. The completion time does not change even if we remove the bandwidth cap. In contrast, Rateless IBLT is throughput-bound, as its completion time keeps decreasing with the increasing bandwidth. If we remove the bandwidth cap, Rateless IBLT takes 2.5 seconds to finish and can saturate a 170 Mbps link using one CPU core on each side.

\begin{figure}
\centering
\includegraphics{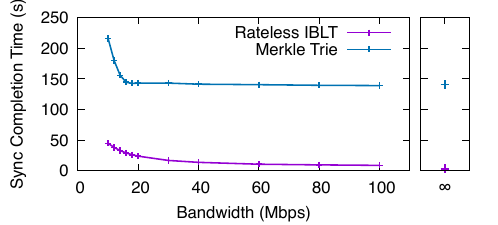}
    \caption[Completion time when synchronizing Ethereum states at different bandwidth]{\label{fig:sync-bandwidth} Completion time when synchronizing Ethereum ledger states that are 10 hours stale over a network link with 50 ms of propagation delay and different bandwidth.}
\end{figure}

Before ending, we quickly discuss a few other potential solutions and how Rateless IBLT compares with them. When Bob's state is consistent with some particular block, he may request Alice to compute and send the state \emph{delta} from his block to the latest block, which would be as efficient as an optimal set reconciliation scheme. However, this is often not the case when Bob needs synchronization, such as when he recovers from database corruption or he has downloaded inconsistent shards of state snapshots from multiple sources, routine when Geth replicas bootstrap~\cite{snapsync}. Rateless IBLT (and state heal) does not assume consistent states. Coded symbols in traditional set reconciliation schemes like regular IBLT are tailored to a fixed difference size (\S~\ref{sec:background}). Alice has to generate a new batch of coded symbols for each peer with a different state. This would add minutes to the latency for large sets like Ethereum, incur significant computation costs, and create denial-of-service vulnerabilities.\footnote{These issues also apply to the aforementioned state delta solution to a lesser degree, because Alice has to compute the state deltas on the fly.} In contrast, Rateless IBLT allows Alice to prepare a single stream of coded symbols that is efficient for all peers. Because of linearity (\S~\ref{sec:design}), Alice can incrementally update the coded symbols as her ledger state changes. For an average Ethereum block, it takes 11 ms to update 50 million coded symbols (7 GB) using one CPU core to reflect the state changes.

\section{Irregular Rateless IBLTs}
\label{sec:general}

When designing Rateless IBLTs (\S~\ref{sec:sequence}), the key task was to
define the mapping rule that decides whether a source symbol $x$ should be
mapped to the $i$-th coded symbol. Lemmas \ref{lemma1} and \ref{lemma2} showed
that the mapping rule cannot be uniform over coded symbols: the probability
that a random source symbol is mapped to the $i$-th coded symbol must decrease
with $i$. In other words, different coded symbols are statistically
inequivalent. On average, a coded symbol with a smaller index sees more source
symbols mapped to it than one with a larger index does.

However, the same is not true for source symbols in our design: every subset of
source symbols uses the same mapping probability $\rho(i) = \frac{1}{1+\alpha
i}$ with the same parameter $\alpha$. This leaves a degree of freedom which we
did not explore. We may divide source symbols into multiple subsets and use
different $\rho(i)$ (in particular, different $\alpha$) for each subset.
Similar techniques have successfully improved the communication costs of
regular IBLTs~\cite{irregular}. In this section, we apply this technique on
Rateless IBLTs and discuss the implications.

Concretely, we partition source symbols into $c$ mutually exclusive subsets,
where $c$ is a parameter. A random source symbol belongs to subset $j$ ($0 \le
j < c$) with probability $w_j$, which is another set of parameters.  For a
source symbol $x$, we choose the subset it belongs to based on its hash. For
example, $x$ belongs to subset $j$ if $\sum_{b=0}^{j-1} w_b \le
\mathtt{Hash}(x) < \sum_{b=0}^{j} w_b$ assuming the hash is uniformly
distributed in $[0, 1)$.  For each subset $j$, we define a parameter $\alpha_j$
and use mapping probability $\rho_j(i) = \frac{1}{1+\alpha_j i}$ when mapping
source symbols in this subset. In other words, we replace $\alpha$ with a
subset-specific $\alpha_j$ in the algorithm described in \S~\ref{sec:mapping}.
By convention, we call this generalized design Irregular Rateless IBLTs.
Rateless IBLTs as discussed prior to this section is a special case where $c=1$,
$w_0 = 1$, and $\alpha_0 = 0.5$.

As mentioned, the main benefit of Irregular Rateless IBLTs over Rateless IBLTs
is a lower communication cost. Unfortunately, the density evolution analysis
does not produce a closed-form result like the one in Theorem~\ref{detheorem}.
To find a good configuration of $c$, $w_j$, and $\alpha_j$ that minimizes the
overhead, we use brute force and try different values in simulations. To
limit the complexity of the search, we set the number of subsets $c$ to $3$ and
found the following optimal configuration
\begin{align*}
    &c = 3,\\
    &w_0, w_1, w_2 = 0.18,0.56,0.26,\\
    &\alpha_0, \alpha_1, \alpha_2 =0.11, 0.68, 0.82.
\end{align*}
As shown in Fig.~\ref{fig:overhead-irregular}, the resulting communication overhead converges to $1.10$, which is $19\%$ lower
than Rateless IBLTs (\S~\ref{sec:design}) and only $10\%$ above the
information-theoretic lower bound. Meanwhile, encoding and decoding are $1.88$
times slower than Rateless IBLTs. As mentioned in \S~\ref{sec:mapping}, the
main reason is that computing mappings when $\alpha \ne 0.5$ requires raising
numbers to arbitrary non-integer powers, while the case of $\alpha =
0.5$ only requires computing square roots, which is faster on modern hardware. We leave further optimizations of the parameters and the implementation
to future works.

\begin{figure}
\centering
\includegraphics{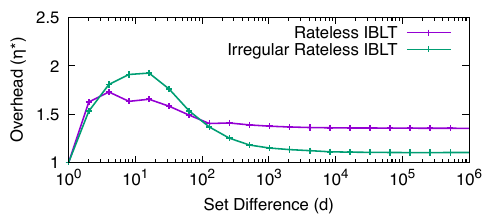}
    \caption[Communication overhead of Rateless IBLTs and Irregular Rateless IBLTs]{\label{fig:overhead-irregular} Communication overhead of Rateless IBLTs and Irregular Rateless IBLTs as the difference size changes. We run 100 simulations for each data point and report
    the average.}
\end{figure}

\section{Conclusion}
We designed, mathematically analyzed, and experimentally evaluated Rateless IBLT. To the best of our knowledge, Rateless IBLT is the first set reconciliation solution with universally low computation cost and near-optimal communication cost across workloads. The distinguishing feature is \emph{ratelessness}: it encodes any set into an infinitely long codeword, of which any prefix is capable of reconciling a proportional number of differences with another set. Ratelessness simplifies deployment as there is no parameter; reduces overhead as nodes can incrementally send longer prefixes without over- or under-committing resources to fixed-sized codewords; and naturally supports concurrent synchronization with multiple nodes. 
We mathematically proved its asymptotic efficiency and showed that the actual performance converges quickly with extensive simulations.
We implemented Rateless IBLT as a library and benchmarked its performance. Finally, we applied Rateless IBLT to a popular distributed application and demonstrated significant gains in state synchronization over the production system.

We point out a few interesting future directions: optimizing the parameters and the implementation of Irregular Rateless IBLTs; considering scenarios where Alice and Bob's sets change in the middle of reconciliation; and designing efficient solutions for reconciliation across more than two parties.

\section*{Acknowledgments}
We thank Francisco L{\'a}zaro for fruitful discussions. Lei Yang was supported
by a gift from the Ethereum Foundation. Yossi Gilad was partially supported by
the Alon Fellowship.

\smallskip\noindent
This work does not raise any ethical issues. Appendices are supporting material that has not been peer-reviewed.

\bibliographystyle{ACM-Reference-Format}

\bibliography{main}

\appendix
\section{Inflexibility of Regular IBLTs}
\label{appendix:iblt-musing}

We state and prove Theorems~\ref{thm:iblt-undersize} and~\ref{thm:iblt-oversize}, which show that the efficiency of regular IBLTs degrades exponentially fast when being used to reconcile more or fewer differences than parameterized for. We state the theorems with respect to generic sets of source symbols. When using IBLTs for set reconciliation (\S~\ref{sec:background}), the sets are $A\bigtriangleup B$.

\begin{theorem}\label{thm:iblt-undersize}
For a random set of $n$ source symbols and a corresponding regular IBLT with $m$ coded symbols, the probability that the peeling decoder can recover at least one source symbol decreases exponentially in $n/m$. 
\end{theorem}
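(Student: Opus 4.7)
The plan is to upper-bound the probability that the decoder recovers any source symbol by the probability that the initial IBLT contains at least one pure coded symbol, and then bound that by its expectation via a first-moment (Markov) argument.

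First I would observe that the peeling decoder can never recover any source symbol unless the initial IBLT already contains at least one pure coded symbol, i.e., a coded symbol to which exactly one source symbol is mapped. Indeed, the decoder only ever extracts a source symbol by reading it off a pure coded symbol, and before any peeling has taken place the only pure symbols available are those produced by the mapping itself. Thus
\[
\Pr[\text{decoder recovers }\ge 1\text{ symbol}]\ \le\ \Pr[\text{IBLT has }\ge 1\text{ pure coded symbol}].
\]

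Next I would estimate the right-hand side by counting pure coded symbols in expectation. Fix a coded symbol $c_j$. Since each source symbol is independently mapped to a uniformly chosen $k$-subset of the $m$ coded symbols, the event that a given source symbol lands on $c_j$ is Bernoulli$(k/m)$, independently across source symbols. Hence the number of source symbols mapped to $c_j$ is Binomial$(n,k/m)$, and the probability that exactly one lands on $c_j$ equals
\[
p_{\text{pure}} \;=\; n\cdot\frac{k}{m}\Bigl(1-\frac{k}{m}\Bigr)^{n-1} \;\le\; \frac{nk}{m}\,e^{-k(n-1)/m}.
\]
Summing over the $m$ coded symbols and applying Markov's inequality (equivalently, a union bound over which symbol is pure) gives
\[
\Pr[\text{IBLT has }\ge 1\text{ pure coded symbol}] \;\le\; m\, p_{\text{pure}} \;\le\; nk\, e^{-k(n-1)/m}.
\]
Writing $r := n/m$, the bound becomes $k r m\, e^{k/m}\, e^{-kr}$, which for fixed $k$ decays exponentially in $r=n/m$, as claimed.

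The argument is a short first-moment calculation, so there is no real technical obstacle. The only subtlety I would flag explicitly is the standing assumption that $k$ (the number of hash functions) is a constant, as is standard for IBLTs; with $k$ fixed, the exponential factor $e^{-k(n-1)/m}$ dominates the polynomial prefactor $n k$ in the regime of interest, namely when the IBLT is undersized so that $n/m$ is large. If one wanted a sharper statement one could use Poisson-approximation or second-moment concentration, but neither is needed for exponential decay of the probability in $n/m$.
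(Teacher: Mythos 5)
Your proposal is correct and follows essentially the same route as the paper: reduce to the event that the initial IBLT contains a pure coded symbol, then bound that event's probability by a first-moment/union bound, arriving at the same $n\left(1-\tfrac{1}{m}\right)^{n-1}\le n e^{-(n-1)/m}$ type of estimate. The only (minor) difference is that you treat general $k$ directly via the Binomial$(n,k/m)$ count, whereas the paper invokes a monotonicity-in-$k$ claim to reduce to $k=1$ and then uses an equivalent balls-in-bins counting argument; your direct computation is arguably cleaner since it avoids that unproven monotonicity step.
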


\begin{proof}
For the peeling decoder to recover at least one source symbol, there must be at least one pure coded symbol at the beginning. Otherwise, peeling cannot start, and no source symbol can be recovered. We now calculate a lower bound on the probability $p_\text{nopure}$ that no pure coded symbol exists. Note that there is another parameter for regular IBLTs, $k$, which determines the number of coded symbols each source symbol is mapped to (\S~\ref{sec:background}). However, it can be shown that the probability that no pure coded symbol exists increases with $k$, so we set $k=1$ to get a lower bound.

We consider the equivalent problem: if we throw $n$ balls (source symbols) uniformly at random into $m$ bins (coded symbols), what is a lower bound on the probability that no bin ends up having exactly one ball? We compute the number of ways $f$ such that at least one bin has exactly one ball, which is the opposite of the event we are interested in. We set aside one of the $m$ bins which will get exactly one ball, and assign one of the $n$ balls to this bin. We then throw the remaining $n-1$ balls into the remaining $m-1$ bins freely. Notice that there are duplicates, so we get an upper bound
$$f \le mn(m-1)^{n-1}.$$
The total number of ways to throw $n$ balls into $m$ bins is $$g = m^n.$$ Each way of throwing is equally likely to happen, so the probability $p_\text{nopure}$ that no bin ends up with exactly one ball has a lower bound
\begin{align*}
p_\text{nopure} &= 1-f/g\\
&\ge 1-\frac{mn(m-1)^{n-1}}{m^n}\\
&= 1-\frac{n(m-1)^{n-1}}{m^{n-1}}\\
&=1-n\left(1-\frac{1}{m}\right)^{n-1}
\end{align*}

We are interested in the event where peeling can start, which is the opposite event. Its probability has an upper bound for $n/m > 1$
\begin{align*}
p_\text{haspure} &= 1-p_\text{nopure}\\
&\le n\left(1-\frac{1}{m}\right)^{n-1}\\
&\le n e^{-\frac{n-1}{m}}\\
&= o(1.5^{-\frac{n}{m}}).
\end{align*}
 
\end{proof}

The following theorem says that when dropping a fraction of a regular IBLT to reconcile a smaller number of differences $n$ with a constant overhead $\eta$ (the ratio between the number of used coded symbols and the number of source symbols $n$), the success probability decreases quickly as a larger fraction gets dropped. 

\begin{theorem}\label{thm:iblt-oversize}
Consider a random set of $n$ source symbols and a corresponding regular IBLT with $m$ coded symbols, where each source symbol is mapped to $k$ coded symbols.
The peeling decoder tries to recover all source symbols using the first $\eta n$ coded symbols. $k$ and $\eta$ are constants, and $\eta n\le m$. The probability that it succeeds decreases exponentially in $1-\eta n/m$.
\end{theorem}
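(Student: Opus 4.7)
The plan is to argue that when a constant fraction $\beta := 1 - \eta n/m$ of the $m$ coded symbols is dropped, with overwhelming probability some source symbol has \emph{all} $k$ of its coded-symbol neighbors inside the dropped tail $[\eta n,\,m)$. Such a source symbol is entirely invisible to peeling run on the received prefix: its XOR contribution appears nowhere in any coded symbol the decoder actually sees, so no sequence of peeling operations can isolate it. Hence its existence is a sufficient condition for decoding to fail, and an upper bound on the success probability follows from an upper bound on the probability that no source symbol is thus ``buried''.

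First I would fix notation and the combinatorial model: each of the $n$ source symbols selects its $k$ coded-symbol neighbors independently from the $m$ positions using the $k$ hash functions of the IBLT. The probability $p$ that a fixed source symbol is buried, i.e.\ all $k$ of its neighbors land in the $m - \eta n$ dropped positions, is then
\[
p \;=\; \left(1 - \frac{\eta n}{m}\right)^{k} \;=\; \beta^{k},
\]
essentially the same as $\binom{m-\eta n}{k}/\binom{m}{k}$ under the ``$k$ distinct neighbors'' convention, which differs by only $1+O(1/m)$ since $k$ is constant. Because the neighbor sets of distinct source symbols are drawn independently, the events ``source symbol $i$ is buried'' are mutually independent, so
\[
\Pr[\text{peeling succeeds}] \;\le\; \Pr[\text{no symbol is buried}] \;=\; (1-p)^{n} \;\le\; e^{-np} \;=\; e^{-n\,\beta^{k}}.
\]
For constant $k$ this is exponentially small in $n\beta^{k}$ once $\beta$ is bounded away from $0$; viewed as a function of the drop fraction $\beta = 1 - \eta n/m$, the success probability decays exponentially in $\beta$ (with the constant $k$ absorbed into the exponent), which yields the theorem.

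The step I expect to be the main obstacle is not the arithmetic but the structural justification of the first observation, namely that a single buried source symbol is enough to kill decoding. This argument relies only on the definition of the peeling rule in Section~\ref{sec:background} and is independent of the random mapping, but it should be written out carefully to make clear that the buried symbol cannot be recovered at any iteration, even after many other source symbols have been peeled away (peeling only removes edges incident to recovered symbols, so it cannot create an edge from the buried symbol into the received prefix). Two minor technicalities worth noting are: the (negligible) probability of hash collisions among the $k$ hashes of the same source symbol, which can be folded into constants; and the cosmetic reparameterization from ``exponential in $\beta^{k}$'' to the theorem's phrasing ``exponential in $\beta$'', which is harmless since $k$ is a constant.
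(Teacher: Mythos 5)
Your proposal is correct and follows essentially the same route as the paper: the paper's ``missed'' source symbol is exactly your ``buried'' one, it computes the same probability $\binom{m-\eta n}{k}/\binom{m}{k}\approx(1-\eta n/m)^k$, and it bounds the success probability by $(1-p)^n\le e^{-n(1-\eta n/m)^k}$ using the same independence across source symbols. Your extra care in justifying that a buried symbol can never be recovered by peeling is a welcome elaboration of a step the paper states in one sentence, but it is not a different argument.
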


\begin{proof}
For the peeling decoder to succeed, each of the source symbols must be mapped at least once to the first $\eta n$ coded symbols.
Because each source symbol is uniformly mapped to $k$ of the $m$ coded symbols, the probability that one is only mapped to the remaining $m-\eta n$ coded symbols that the decoder does not use (``missed'') is
\begin{align*}
p_\text{missone} &= \binom{m-\eta n}{k}/\binom{m}{k}\\
&= \frac{(m-\eta n)(m-\eta n-1)\dots(m-\eta n-k+1)}{m(m-1)\dots(m-k+1)}\\
&\approx \left(1-\frac{\eta n}{m}\right)^k.
\end{align*}
The last step approximates $\frac{m-\eta n-k+1}{m-k+1}$ with $\frac{m-\eta n}{m}$. This does not change the result qualitatively because $k$ is a constant. 

The probability that no source symbol is missed is
\begin{align*}
    p_\text{nomiss} &= (1-p_\text{missone})^n\\
    &\le e^{-np_\text{missone}}\\
    &= e^{-n\left(1-\frac{\eta n}{m}\right)^k}.
\end{align*}
\end{proof}

\section{Calculation of $P_g$ and $C(x)$}
\label{appendix:mapping-calc}
In this section, we calculate $P_g$ and $C(x)$ as defined in \S~\ref{sec:mapping}. 
\begin{align*}
P_g &=(1-\rho(i+1))(1-\rho(i+2))\dots(1-\rho(i+g-1))\rho(i+g)\\
&=\frac{1/\alpha}{i+g+1/\alpha}\prod_{n=1}^{g-1}\frac{i+n}{i+n+1/\alpha} \\
&=\frac{(i+1)_{g-1}}{\alpha(i+1+1/\alpha)_{g}}.
\end{align*}
Here, $(x)_n$ is the Pochhammer symbol.\footnote{$(x)_n = x (x+1)\dots(x+n-1)$.}

Before proceeding to calculate $C(x)$, we first prove a useful identity about quotients of Gamma functions,
\begin{equation*}
\frac{\Gamma(x)}{\Gamma(x+y)}\equiv\frac{1}{y-1}\left(\frac{\Gamma(x)}{\Gamma(x+y-1)}-\frac{\Gamma(x+1)}{\Gamma(x+y)}\right).
\end{equation*}
We start from the right-hand side,
\begin{align*}
&\frac{1}{y-1}\left(\frac{\Gamma(x)}{\Gamma(x+y-1)}-\frac{\Gamma(x+1)}{\Gamma(x+y)}\right) \\
=&\frac{1}{y-1}\left(\frac{(x+y-1)\Gamma(x)}{(x+y-1)\Gamma(x+y-1)}-\frac{\Gamma(x+1)}{\Gamma(x+y)}\right)\\
=&\frac{1}{y-1}\left(\frac{(y-1)\Gamma(x)+\Gamma(x+1)}{\Gamma(x+y)}-\frac{\Gamma(x+1)}{\Gamma(x+y)}\right)\\
=&\frac{\Gamma(x)}{\Gamma(x+y)}.
\end{align*}
The identity immediately implies the following
\begin{equation}
\label{eq:identity-gamma-series}\sum_{x=a}^b\frac{\Gamma(x)}{\Gamma(x+y)}\equiv\frac{1}{y-1}\left(\frac{\Gamma(a)}{\Gamma(a+y-1)}-\frac{\Gamma(b+1)}{\Gamma(b+y)}\right).
\end{equation}
We now calculate $C(x)$.
\begin{align*}
C(x) &= \sum_{g=1}^x P_g\\
&= \sum_{g=1}^x \frac{(i+1)_{g-1}}{\alpha(i+1+1/\alpha)_{g}}\\
&=\sum_{g=1}^x\frac{\Gamma(i+g)\Gamma(i+1+1/\alpha)}{\alpha\Gamma(i+1)\Gamma(i+1+g+1/\alpha)}\\
&=\frac{\Gamma(i+1+1/\alpha)}{\alpha\Gamma(i+1)}\sum_{g=1}^x\frac{\Gamma(i+g)}{\Gamma(i+g+1+1/\alpha)}\\
&=\frac{\Gamma(i+1+\frac{1}{\alpha})}{\Gamma(i+1)}\left(\frac{\Gamma(i+1)}{\Gamma(i+1+\frac{1}{\alpha})}-\frac{\Gamma(i+x+1)}{\Gamma(i+x+1+\frac{1}{\alpha})}\right)\\
&=1-\frac{\Gamma(i+1+\frac{1}{\alpha})\Gamma(x+i+1)}{\Gamma(i+1)\Gamma(x+i+1+\frac{1}{\alpha})}.
\end{align*}
The second last equality results from applying Eq.~\ref{eq:identity-gamma-series}.

\section{Deferred Proofs}
\label{sec:proofs}

\begin{lemma}[Restatement of Lemma~\ref{lemma1}]
For any $\epsilon > 0$, any mapping probability $\rho(i)$ such that $\rho(i) = \Omega\left(1/i^{1-\epsilon}\right)$, and any $\sigma > 0$,
if there exists at least one pure coded symbol within the first $m$ coded symbols for a random set $S$ with probability $\sigma$, then $m = \omega(|S|)$.
\end{lemma}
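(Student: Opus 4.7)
The plan is to prove the contrapositive: assuming $m = O(|S|)$, I will show that the probability $P_{\text{pure}}(m)$ of having at least one pure symbol among the first $m$ coded symbols tends to $0$ as $|S| \to \infty$, which contradicts $P_{\text{pure}}(m) \ge \sigma$ for any fixed $\sigma > 0$.

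First, I would write down the probability that the $i$-th coded symbol is pure. Since each of the $|S|$ source symbols is mapped to index $i$ independently with probability $\rho(i)$, the probability that exactly one is mapped there is
\[
p_i \;=\; |S|\,\rho(i)\,(1-\rho(i))^{|S|-1} \;\le\; |S|\,\rho(i)\,e^{-(|S|-1)\rho(i)}.
\]
A union bound then gives $P_{\text{pure}}(m) \le \sum_{i=0}^{m-1} p_i$, which reduces the task to bounding the summands.

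Next, I would use the hypothesis $\rho(i) = \Omega(1/i^{1-\epsilon})$ to fix constants $c > 0$ and $i_0$ with $\rho(i) \ge c/i^{1-\epsilon}$ for all $i \ge i_0$. Assuming for contradiction that $m \le C|S|$ for some constant $C$, for every $i_0 \le i < m$ we have
\[
|S|\,\rho(i) \;\ge\; \frac{c\,|S|}{m^{1-\epsilon}} \;\ge\; \frac{c\,|S|^{\epsilon}}{C^{1-\epsilon}},
\]
and this quantity tends to infinity with $|S|$ precisely because $\epsilon > 0$. Using the elementary bound $xe^{-x} \le e^{-x/2}$ for $x$ sufficiently large, each $p_i$ is then at most $\exp\!\bigl(-\tfrac{c}{2} |S|^{\epsilon}/C^{1-\epsilon}\bigr)$ once $|S|$ is large enough. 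The first $i_0$ terms contribute at most a constant number of summands each bounded by $1$, which is negligible.

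Finally, summing over the at most $C|S|$ terms gives
\[
P_{\text{pure}}(m) \;\le\; i_0 \cdot 1 \;+\; C\,|S|\,\exp\!\Bigl(-\tfrac{c}{2C^{1-\epsilon}}\,|S|^{\epsilon}\Bigr),
\]
wait, the first $i_0$ terms need a sharper bound — each is still at most $1$ but they contribute a fixed probability. Actually each $p_i \le |S|\rho(i)\exp(-(|S|-1)\rho(i))$; for $i < i_0$, if $\rho(i)$ is bounded below by a positive constant, the exponential factor already drives $p_i$ to $0$ exponentially in $|S|$. So the small-$i$ terms are handled by the same argument. The stretched-exponential bound $\exp(-\Theta(|S|^\epsilon))$ dominates the polynomial factor $C|S|$, so $P_{\text{pure}}(m) \to 0$, contradicting $P_{\text{pure}}(m) \ge \sigma$. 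Hence $m$ cannot be $O(|S|)$, i.e., $m = \omega(|S|)$.

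The main obstacle is the careful handling of the small-$i$ regime where $\rho(i)$ may be close to $1$ and the asymptotic lower bound $c/i^{1-\epsilon}$ does not yet apply; this must be verified separately to ensure every $p_i$ is either super-polynomially small or comes from a bounded index set. The rest is essentially a union bound combined with the observation that the strictly positive exponent $\epsilon$ forces $|S|\rho(i) \to \infty$ on the entire first-$m$ window, which is exactly the regime where $xe^{-x}$ is negligible.
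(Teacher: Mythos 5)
Your proposal is correct and follows essentially the same route as the paper's proof: bound the purity probability of the $i$-th coded symbol by $|S|\rho(i)e^{-\Theta(|S|\rho(i))}$, observe that on the window $i < C|S|$ the hypothesis $\rho(i)=\Omega(1/i^{1-\epsilon})$ forces $|S|\rho(i)=\Omega(|S|^{\epsilon})\to\infty$, and conclude that the total probability of seeing any pure symbol is at most $C|S|\cdot e^{-\Theta(|S|^{\epsilon})}\to 0$, contradicting the assumed constant $\sigma$. The only cosmetic difference is that you aggregate via a union bound while the paper takes the exact product $\prod_i(1-P_i)$ over the (independent) purity events, and your handling of the finitely many small indices $i<i_0$ (where each nonzero $\rho(i)$ is bounded below by a constant, so the exponential factor still kills $p_i$) matches the paper's device of taking $\rho_0$ to be the smallest nonzero value of $\rho$ on that range.
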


\begin{proof}
We need to show $\forall \eta > 0\, \exists |S|_0 > 0\, \forall |S| > |S|_0: m > \eta |S|$. Because $\rho(i) =\Omega\left(1/i^{1-\epsilon}\right)$, there exists $\delta > 0$ and $i_0 > 0$, such that
$\rho(i) \ge \delta/i^{1-\epsilon}$ for all $i > i_0$. Let $\rho_0$ be the smallest non-zero value among $\rho(i)$ for all $0 \le i \le i_0$. Let
$$|S|_0 = \max\left(\left(\frac{\eta^{1-\epsilon}}{\delta}\right)^{\frac{1}{\epsilon}},\frac{1}{\eta}\left(\frac{\delta}{\rho_0}\right)^{\frac{1}{1-\epsilon}}, |S|^*\right)$$
where $|S|^*$ is such that for all $|S| > |S|^*$, 
$$e\cdot\delta\eta^\epsilon |S|^{1+\epsilon} \exp\left(-{\frac{\delta|S|^\epsilon}{\eta^{1-\epsilon}}}\right) < \sigma.$$ Note that $\exp\left({\frac{\delta|S|^\epsilon}{\eta^{1-\epsilon}}}\right) = \omega\left(|S|^{1+\epsilon}\right)$, so such $|S|^*$ always exists.

For any $i \ge 0$, the $i$-th coded symbol is pure if and only if \emph{exactly} one source symbol is mapped to it, which happens with probability
\begin{align*}
P_i &= |S|\rho(i)(1-\rho(i))^{|S|-1}\\
&\le e\cdot|S|\rho(i)e^{-|S|\rho(i)}.
\end{align*}
The inequality comes from the fact that $(1-x)^y \le e^{-xy}$ for any $0 \le x \le 1$ and $y \ge 1$.

By the definition of $|S|_0$, for any $|S|>|S|_0$ and any $0 \le i \le \eta |S|$, either $\rho(i) =0$, or $\rho(i) \ge \frac{\delta}{(\eta |S|)^{1-\epsilon}}$ and $|S|\rho(i) > 1$. In either case,
$$P_i \le e \cdot \frac{\delta|S|^\epsilon}{\eta^{1-\epsilon}}\exp\left(-{\frac{\delta|S|^\epsilon}{\eta^{1-\epsilon}}}\right).$$

Recall that we want at least one pure symbol among the first $m$ coded symbols. Assume for contradiction that $m \le \eta |S|$. Then, failure happens with probability
\begin{align*}
P_\text{fail} &= \prod_{i=0}^{m-1}(1-P_i)\\
&\ge \prod_{i=0}^{\eta |S|-1}(1-P_i)\\
&\ge \left(1-e \cdot \frac{\delta|S|^\epsilon}{\eta^{1-\epsilon}}\exp\left(-{\frac{\delta|S|^\epsilon}{\eta^{1-\epsilon}}}\right)\right)^{\eta |S|}\\
&\ge 1-e\cdot\delta\eta^\epsilon |S|^{1+\epsilon} \exp\left(-{\frac{\delta|S|^\epsilon}{\eta^{1-\epsilon}}}\right)\\
&> 1-\sigma.
\end{align*}
\end{proof}

We remark that a stronger result which only requires $\rho(i) = \omega\left(\log i/i\right)$ can be shown with a very similar  proof, which we omit for simplicity and lack of practical implications. We may also consider a generalization of this lemma, by requiring there to be at least $k$ coded symbols with at most $k$ source symbols mapped to each, for every $k \le |S|$. (Lemma~\ref{lemma1} is the special case of $k=1$.) This may lead to an even tighter bound on $\rho(i)$, which we conjecture to be $\rho(i) = \omega(1/i)$.

\begin{lemma}[Restatement of Lemma~\ref{lemma2}]
For any mapping probability $\rho(i)$ such that $\rho(i)=o\left(1/i\right)$, and any $\sigma > 0$, if there exist at least $|S|$ non-empty coded symbols within the first $m$ coded symbols for a random set $S$ with probability $\sigma$, then $m = \omega(|S|)$.
\end{lemma}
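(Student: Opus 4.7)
The plan is to prove the contrapositive: if $m \le \eta |S|$ for some constant $\eta$ and infinitely many $|S|$, then along this subsequence the probability that the first $m$ coded symbols contain at least $|S|$ non-empty slots tends to $0$, contradicting the hypothesis that this probability is at least $\sigma > 0$. So fix $\eta$ and write $N = \sum_{i=0}^{m-1} X_i$, where $X_i$ is the indicator that the $i$-th coded symbol is non-empty. Since source symbols are independent in the $\rho$-mapping model, $E[X_i] = 1 - (1-\rho(i))^{|S|} \le \min\!\bigl(1,\, |S|\rho(i)\bigr)$, and I will bound $P(N \ge |S|)$ by Markov's inequality applied to $E[N]$.

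The main step is to show $E[N] = o(|S|)$ when $m = O(|S|)$. The crude bound $E[N] \le |S|\sum_{i=0}^{m-1}\rho(i) = o(|S|\log m)$ is too weak: for $m = \eta|S|$ it can be as large as $|S|\log|S|$, which is useless for Markov. The fix is to use the trivial cap $E[X_i] \le 1$ for small $i$, where $|S|\rho(i)$ may be close to $1$, and the cap $E[X_i] \le |S|\rho(i)$ only for large $i$. Concretely, given any $\delta > 0$, by $\rho(i) = o(1/i)$ there exists $i_0 = i_0(\delta)$ with $\rho(i) \le \delta/i$ for all $i \ge i_0$. For $|S|$ large enough that $\delta|S| \ge i_0$, split at $k = \delta|S|$:
\begin{equation*}
E[N] \;\le\; \sum_{i<k} 1 \;+\; \sum_{i=k}^{m-1} |S|\rho(i) \;\le\; \delta|S| \;+\; \delta|S|\,\ln(m/k) \;\le\; \delta|S|\bigl(1 + \ln(\eta/\delta)\bigr).
\end{equation*}

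Since $\delta \ln(1/\delta) \to 0$ as $\delta \to 0^+$, the function $\delta \mapsto \delta(1 + \ln(\eta/\delta))$ tends to $0$. Hence for any $\sigma > 0$ I can choose $\delta$ small enough that $\delta(1+\ln(\eta/\delta)) < \sigma$, and then for all sufficiently large $|S|$ Markov gives $P(N \ge |S|) \le E[N]/|S| < \sigma$, the desired contradiction. The main obstacle is precisely the tension in the second step: neither $E[X_i] \le 1$ nor $E[X_i] \le |S|\rho(i)$ alone suffices, and the split location $k = \delta|S|$ must be tuned to make the early-segment contribution $\delta|S|$ and the tail-segment contribution $\delta|S|\ln(\eta/\delta)$ \emph{jointly} $o(|S|)$; this is exactly what the $o(1/i)$ hypothesis buys, since it forces the transition region $|S|\rho(i) \asymp 1$ to sit at $i = o(|S|)$, whereas a $\Theta(1/i)$ rate would place the transition at $i = \Theta(|S|)$ and the bound would degenerate.
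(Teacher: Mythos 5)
Your proof is correct, and it takes a genuinely different route from the paper's. You argue via the first moment: bounding $E[N]=\sum_i \bigl(1-(1-\rho(i))^{|S|}\bigr)$ by splitting the sum at a tunable index $k=\delta|S|$, using the trivial cap $E[X_i]\le 1$ below $k$ and the union-bound cap $E[X_i]\le |S|\rho(i)\le \delta|S|/i$ above it, so that $E[N]\lesssim \delta|S|(1+\ln(\eta/\delta))=o_{\delta}(1)\cdot|S|$, and then Markov kills the event $\{N\ge|S|\}$. The paper instead fixes the split at $i=|S|/2$, observes that each coded symbol with index $i\ge|S|/2$ is non-empty with probability at most $2\delta$ where $\delta=1/(4\eta)$, notes that at least $|S|/2$ of the non-empty symbols must come from those tail indices, and applies Hoeffding's inequality to show this happens with probability exponentially small in $|S|$. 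The trade-offs: your argument is more elementary (no concentration inequality, and Markov does not even require independence of the occupancy indicators across indices, which Hoeffding implicitly does), and it exploits the $o(1/i)$ hypothesis more transparently through the harmonic tail sum; the paper's argument yields a quantitatively stronger conclusion (exponentially small success probability rather than merely below $\sigma$), which is irrelevant to the lemma as stated but aligns with the exponential-degradation theme of the adjacent theorems. Two cosmetic points to tighten: take $k=\lceil\delta|S|\rceil$ and require $|S|$ large enough that $k\ge i_0(\delta)$ before invoking $\rho(i)\le\delta/i$ on the tail (you do note this), and replace $\sum_{i=k}^{m-1}1/i\le\ln(m/k)$ with the integral comparison $\le 1/k+\ln(m/k)$, which changes nothing in the limit.
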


\begin{proof}
We need to show that $\forall \eta > 0\, \exists |S|_0 > 0\, \forall |S| > |S|_0: m > \eta |S|$. First, note that for there to be $|S|$ non-empty symbols within the first $m$ coded symbols, $m$ cannot be smaller than  $|S|$, so the statement is trivially true for $0 < \eta < 1$. We now prove for the case of $\eta \ge 1$. 

For any $\eta \ge 1$, let $\delta = \frac{1}{4\eta}$. Because $\rho(i) = o(1/i)$, there must exist $i_0 > 0$ such that $ \rho(i) < \delta/i$ for all $i > i_0$.
Let $$|S|_0 = \max(2i_0, 4\eta^2(1-2\eta)\log(\sigma)).$$

For all $i \ge |S|/2$, the $i$-th coded symbol is non-empty with probability
\begin{align*}
P_i &= 1-\left(1-\rho\left(i\right)\right)^{|S|}\\
&< 1-\left(1-\frac{2\delta}{|S|}\right)^{|S|}\\
&\le 2\delta.
\end{align*}
The first inequality is because $\rho(i) < \delta/i$ for $i \ge |S|/2$, and the second inequality is because $(1-x)^y \ge 1-xy$ for any $x < 1$ and $y \ge 1$.

In order to get $|S|$ non-empty symbols among the first $m$ coded symbols, there must be at least $|S|/2$ non-empty symbols from index $i=|S|/2$ to index $i= m-1$.
To derive an upper bond on this probability,
we assume that each is non-empty with probability $2\delta$, which, as we just saw, is strictly an overestimate. By Hoeffding's inequality, the probability that there are at least $|S|/2$ non-empty symbols has an upper bound
$$P_\text{succ} < \exp\left(\left(|S|-2m\right)\left(2\delta-\frac{|S|}{2m-|S|}\right)^2\right)$$
when $m \le (\frac{1}{4\delta} + \frac{1}{2})|S|$, which is true for all $m \le \eta |S|$.

Assume $m \le \eta |S|$ for contradiction.
By the definition of $\delta$, the previous upper bound becomes
$$P_\text{succ} < \exp\left(\frac{|S|}{4\eta^2(1-2\eta)}\right).$$
The right hand side monotonically decreases with $|S|$. So, by the definition of $|S|_0$, for all $|S| > |S|_0$, 
$$P_\text{succ} < \exp\left(\frac{|S|_0}{4\eta^2(1-2\eta)}\right) \le\sigma.$$
\end{proof}

\begin{theorem}[Restatement of Theorem~\ref{detheorem}] For a random set of $n$ source symbols, the probability that the peeling decoder successfully recovers the set using the first $\eta n$ coded symbols (as defined in \S~\ref{sec:design}) goes to 1 as $n$ goes to infinity. Here, $\eta$ is any positive constant such that $$\forall q \in (0, 1]: e^{\frac{1}{\alpha}\text{Ei}\left(-\frac{q}{\alpha \eta}\right)} < q.$$
\end{theorem}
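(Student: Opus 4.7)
The plan is to analyze the peeling decoder via density evolution on the random bipartite graph in which each source symbol $s$ is incident to coded symbol $c_i$ independently with probability $\rho(i)=1/(1+\alpha i)$. In the large-$n$ limit, the local neighborhood of a random source symbol converges to a tree, and the number of other source symbols incident to $c_i$ is asymptotically Poisson with mean $\lambda_i=n\rho(i)$. Let $q^{(\ell)}$ denote the probability (in this tree limit) that a random source symbol is still unrecovered after $\ell$ rounds of peeling, initialized at $q^{(0)}=1$. A coded symbol $c_i$ peels an attached source symbol in round $\ell$ exactly when every other neighbor of $c_i$ was already recovered by round $\ell-1$, an event of probability $e^{-\lambda_i q^{(\ell-1)}}$ under the tree/Poisson approximation. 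Multiplying the independent ``$c_i$ fails to peel $s$'' events over $i$ gives the density-evolution recursion
\begin{equation*}
q^{(\ell)}=\prod_{i=0}^{m-1}\bigl(1-\rho(i)\,e^{-\lambda_i q^{(\ell-1)}}\bigr).
\end{equation*}

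Next I would take logarithms and pass to the continuum. Substituting $t=i/n$ and noting that $\rho(i)\sim 1/(\alpha n t)$ and $\lambda_i\sim 1/(\alpha t)$ for large $n$, the approximation $\ln(1-x)\approx -x$ (valid because the leading terms are uniformly small away from a vanishing neighborhood of $i=0$) converts the log-sum into the Riemann sum for $-\frac{1}{\alpha}\int_0^{\alpha\eta}\frac{e^{-q^{(\ell-1)}/(\alpha t)}}{t}\,dt$. The substitution $u=q^{(\ell-1)}/(\alpha t)$ collapses this integral to $\frac{1}{\alpha}\text{Ei}\bigl(-q^{(\ell-1)}/(\alpha\eta)\bigr)$, so the density-evolution map is $q^{(\ell)}=T\bigl(q^{(\ell-1)}\bigr)$ with $T(q)=\exp\bigl(\tfrac{1}{\alpha}\text{Ei}(-q/(\alpha\eta))\bigr)$. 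The theorem's hypothesis is exactly $T(q)<q$ on $(0,1]$; together with the continuous extension $T(0)=0$, this forces the strictly decreasing iterates $q^{(\ell)}$ to converge to the unique fixed point $q^{\star}=0$.

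Finally, I would lift the tree-limit statement $q^{(\ell)}\to 0$ to the stronger claim that peeling recovers \emph{all} $n$ source symbols with probability tending to $1$. A standard two-stage argument suffices. First, Wormald's differential-equation method (or an edge-exposure Doob martingale, as in prior LT and IBLT analyses) shows that for each fixed $\ell$ the empirical fraction of unrecovered source symbols is within $o(1)$ of $q^{(\ell)}$ with probability $1-o(1)$. Second, once fewer than $\epsilon n$ symbols remain, a graph-expansion argument in the style of Luby's LT analysis finishes decoding whenever the DE map has a uniform gap $q-T(q)\ge\delta(q)>0$, which follows from the hypothesis by compactness on $[\epsilon,1]$. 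I expect the main obstacle to be this concentration step, complicated by the fact that $\rho(0)=1$ makes $c_0$ incident to every source symbol and hence of degree $\Theta(n)$, so the neighborhood is not locally tree-like around $c_0$. The clean fix is to excise $c_0$ (and, if needed, the $O(1/\alpha)$ smallest-index symbols) from the density-evolution argument and treat them as a deterministic final cleanup stage, which aligns with the protocol itself: the symbol $a_0\oplus b_0$ is peeled only after every other source symbol is recovered, and its disappearance is Bob's termination signal.
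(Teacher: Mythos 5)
Your proposal follows essentially the same route as the paper: density evolution on the random bipartite graph, a Poisson limit for the relevant degree distribution, a Riemann-sum passage to the integral that produces the map $T(q)=\exp\bigl(\tfrac{1}{\alpha}\mathrm{Ei}(-q/(\alpha\eta))\bigr)$, and the standard criterion $T(q)<q$ on $(0,1]$; the paper differs only in carrying this out via edge-degree generating functions, Le Cam's theorem, and a squeeze-theorem computation of the limit rather than your first-order $\ln(1-x)\approx -x$ shortcut. Your added care about the concentration step and about the degree-$\Theta(n)$ coded symbol $c_0$ (where $\rho(0)=1$ breaks local tree-likeness) addresses a point the paper delegates entirely to the cited density-evolution literature, and your only slip is the upper limit of the integral, which should be $\eta$ rather than $\alpha\eta$ for the stated substitution to yield $\mathrm{Ei}(-q/(\alpha\eta))$.
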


\begin{figure}
    \centering
    \begin{tikzpicture}[>=latex']
        \small
 
        \node[anchor=east] (setlabel) at (-0.6,-0.4) {Source symbols};
        \draw [draw=gray, thick, dashed, rounded corners] (-0.5,0.1) rectangle (3.2,-0.9);

        
        \node[draw, circle, text centered,fill=myred] (x0) at (0, -0.4) {$x_0$};
        \node[draw, circle, text centered,fill=myred] (x1) at (0.9, -0.4) {$x_1$};
        \node[draw, circle, text centered,fill=myred] (x2) at (1.8, -0.4) {$x_2$};
        \node[draw, circle, text centered,fill=myred] (x3) at (2.7, -0.4) {$x_3$};

        \node[anchor=east] (codelabel) at (-0.6,-2) {Coded symbols};
        \draw [draw=gray, thick, dashed, rounded corners] (-0.5,-1.5) rectangle (5.0,-2.5);
        \node[draw, rectangle, text centered, minimum size=1.8em,fill=mygreen] (a0) at (0, -2) {$a_0$};
        \node[draw, rectangle, text centered, minimum size=1.8em,fill=mygreen] (a1) at (0.9, -2) {$a_1$};
        \node[draw, rectangle, text centered, minimum size=1.8em,fill=mygreen] (a2) at (1.8, -2) {$a_2$};
        \node[draw, rectangle, text centered, minimum size=1.8em,fill=mygreen] (a3) at (2.7, -2) {$a_3$};
        \node[draw, rectangle, text centered, minimum size=1.8em,fill=mygreen] (a4) at (3.6, -2) {$a_4$};
        \node[draw, rectangle, text centered, minimum size=1.8em,fill=mygreen] (a5) at (4.5, -2) {$a_5$};

        \draw [] (x0) -- node[] {} (a0);
        \draw [] (x1) -- node[] {} (a0);
        \draw [] (x2) -- node[] {} (a0);
        \draw [] (x3) -- node[] {} (a0);
        \draw [] (x1) -- node[] {} (a1);
        \draw [] (x3) -- node[] {} (a1);
        \draw [] (x0) -- node[] {} (a2);
        \draw [] (x3) -- node[] {} (a2);
        \draw [] (x0) -- node[] {} (a3);
        \draw [] (x3) -- node[] {} (a3);
        \draw [] (x0) -- node[] {} (a4);
        \draw [] (x2) -- node[] {} (a4);
        \draw [] (x3) -- node[] {} (a5);

    \end{tikzpicture}
    \caption[Example of the bipartite graph representation of the mapping between source symbols and coded symbols]{\label{fig:iblt-graph} Example of the bipartite graph representation of a set of source symbols, $x_0, x_1, x_2, x_3$, and its first 6 coded symbols, $a_0, a_1, \dots, a_5$.}
\end{figure}
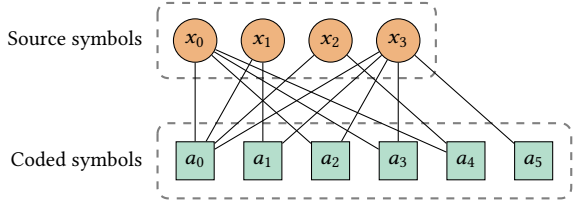

Before proving the theorem, we introduce the graph representation of a set of source symbols and the corresponding coded symbols. Imagine a bipartite graph where each source or coded symbol is a vertex in the graph, and there is an edge between a source and a coded symbol if and only if the former is mapped to the latter during encoding. Fig.~\ref{fig:iblt-graph} is an example.
We define the degree of a symbol as the number of neighbors it has in this bipartite graph, i.e., its degree as in graph theory. For example, in Fig.~\ref{fig:iblt-graph}, source symbol $x_0$ has degree 4, and coded symbol $a_1$ has degree 2.

We also define the degree of an \emph{edge} in the graph~\cite{irregular}. The \emph{source degree} of an edge is the degree of the source symbol it connects to, and its \emph{coded degree} is the degree of the coded symbol it connects to. For example, for the edge connecting $x_3$ and $a_3$ in Fig.~\ref{fig:iblt-graph}, its source degree is 5 because $x_3$ has degree 5, and its coded degree is 2 because $a_3$ has degree 2.

We remark that density evolution is a standard technique~\cite{density,density2} of analyzing codes that are based on random graphs, such as LT~\cite{lt} and LDPC~\cite{ldpc} codes. Our proof mostly follows these analysis, in particular, \cite[\S~2]{density2} and \cite[\S~III]{irregular}. However, the mapping probability $\rho(i)$ in Rateless IBLTs is a function whose parameter $i$ goes to infinity as the set size goes to infinity. This is a key challenge that we solve in our analysis, which enables us to get the closed-form expression in Theorem~\ref{detheorem}. 

\begin{proof}
Consider $n$ random source symbols and its first $m$ coded symbols.
Let $\Lambda$ be the random variable denoting the degree of a random source symbol. Let $\Lambda_u$ ($0 \le u \le m$) be the probability that $\Lambda$ takes value $u$.
Similarly, let $\Psi$ be the random variable denoting the degree of a random coded symbol. Let $\Psi_v$ ($0 \le v \le n$) be the probability that $\Psi$ takes value $v$.
Define the probability generating functions of $\Lambda$ and $\Psi$,
\begin{align*}
\Lambda(x) &= \sum_{u=0}^{m} \Lambda_u x^u, \\
   \Psi(x) &= \sum_{v=0}^{n} \Psi_v x^v.
\end{align*}

We also consider the degree of a random edge. Let $\lambda$ be the random variable denoting the source degree of a random edge. Let $\lambda_u$ ($0 \le u \le m$) be the probability that $\lambda$ takes value $u$.
It is the fraction of edges with source degree $u$ among all edges, i.e.,
\begin{align*}
\lambda_u &= \frac{\Lambda_u u}{\sum_{w=0}^{m}\Lambda_w w}\\
&=\frac{\Lambda_u u}{\mathbb{E}(\Lambda)}.
\end{align*}
Let $\lambda(x)$ be the generating function of $\lambda$, defined as
\begin{align*}
    \lambda(x) &= \sum_{u=0}^{m} \lambda_u x^{u-1}\\
    &= \frac{1}{\mathbb{E}(\Lambda)}\sum_{u=0}^{m} \Lambda_u u x^{u-1}\\
    &= \frac{\Lambda'(x)}{\mathbb{E}(\Lambda)}.
\end{align*}
Similarly, let $\varphi$ be the random variable denoting the coded degree of a random edge. Let $\varphi_v$ ($0 \le v \le n$) be the probability that $\varphi$ takes value $v$. It is the fraction of edges with coded degree $v$ among all edges, i.e.,
\begin{align*}
\varphi_v &= \frac{\Psi_v v}{\sum_{w=0}^{n}\Psi_w w}\\
&=\frac{\Psi_v v}{\mathbb{E}(\Psi)}.
\end{align*}
Let $\varphi(x)$ be the generating function of $\varphi$, defined as
\begin{align*}
    \varphi(x) &= \sum_{v=0}^{n} \varphi_v x^{v-1}\\
    &= \frac{1}{\mathbb{E}(\Psi)}\sum_{v=0}^{n} \Psi_v v x^{v-1}\\
    &= \frac{\Psi'(x)}{\mathbb{E}(\Psi)}.
\end{align*}

Let us now consider $\Psi(x)$.
Recall that each of the $n$ random source symbols is mapped to the $i$-th coded symbol independently with probability $\rho(i)$. The degree of the $i$-th coded symbol thus follows binomial distribution, which takes $v$ with probability $\binom{n}{v}\rho^v(i) (1-\rho(i))^{n-v}$. Because we are interested in a random coded symbol, its index $i$ takes $0, 1, \dots, m-1$ with equal probability $1/m$. By the law of total probability,
\begin{align*}
    \Psi_v = \frac{1}{m} \sum_{i=0}^{m-1} \binom{n}{v}\rho^v(i) (1-\rho(i))^{n-v}.
\end{align*}
Plugging it into the definition of $\Psi(x)$, we get
\begin{align*}
    \Psi(x) &= \sum_{v=0}^{n} \Psi_v x^v\\
    &=  \sum_{v=0}^{n} \frac{1}{m}\sum_{i=0}^{m-1} \binom{n}{v}\rho^v(i) (1-\rho(i))^{n-v} x^v \\
    &= \frac{1}{m}\sum_{i=0}^{m-1} \sum_{v=0}^{n} \binom{n}{v}(x\rho(i))^v (1-\rho(i))^{n-v}  \\
    &= \frac{1}{m}\sum_{i=0}^{m-1} (1-(1-x)\rho(i))^n.
\end{align*}
Here, the last step is because of the binomial theorem. Plugging it into the definition of $\varphi(x)$, we get
\begin{align*}
\varphi(x) &= \frac{n}{m\mathbb{E}(\Psi)} \sum_{i=0}^{m-1}\rho(i)(\rho(i)(x-1)+1)^{n-1}.
\end{align*}
By the handshaking lemma, which says the sum of the degree of all source symbols should equal the sum of the degree of all coded symbols, 
\begin{align*}
    m\mathbb{E}(\Psi) = m\sum_{v=0}^{n}\Psi_v v = n\sum_{u=0}^{m}\Lambda_u u = n \mathbb{E}(\Lambda).
\end{align*}
So, we can further simplify $\varphi(x)$ as
\begin{align*}
\varphi(x) &= \frac{1}{\mathbb{E}(\Lambda)} \sum_{i=0}^{m-1}\rho(i)(\rho(i)(x-1)+1)^{n-1}.
\end{align*}

Next, let us consider $\Lambda(x)$. For a random source symbol, it is mapped to the $i$-th coded symbol independently with probability $\rho(i)$. Its degree $\Lambda$ is thus the sum of independent Bernoulli random variables with success probabilities $\rho(0), \rho(1), \dots, \rho(m-1)$, which follows Poisson binomial distribution. By an extension~\cite[\S~5]{lecamextension} to Le Cam's theorem~\cite{lecam}, we can approximate this distribution with a Poisson distribution of rate $\sum_{i=0}^{m-1}\rho(i)$, i.e., $\mathbb{E}(\Lambda)$, with the total variation distance between the two distributions tending to zero as $m$ goes to infinity. That is,
\begin{align*}
    \sum_{u=0}^{\infty}\left|\Lambda_u - \frac{(\mathbb{E}(\Lambda))^u e^{-\mathbb{E}(\Lambda)}}{u!}\right| < \frac{2}{\mathbb{E}(\Lambda)}\sum_{i=0}^{m-1}{\rho^2(i)}.
\end{align*}
When $\rho(i)=\frac{1}{1+\alpha i}$ for any $\alpha > 0$, the right hand side of the inequality goes to zero as $m$ goes to infinity.

Recall that the probability generating function of a Poisson random variable with rate $\mathbb{E}(\Lambda)$ is
\begin{align*}
    \Lambda(x) = e^{\mathbb{E}(\Lambda)(x-1)}.
\end{align*}
Plugging it into the definition of $\lambda(x)$, we get
\begin{align*}
   \lambda(x) = e^{\mathbb{E}(\Lambda)(x-1)}.
\end{align*}

Let $q$ denote the probability that a randomly chosen edge connects to a source symbol that is \emph{not yet} recovered. As decoding progresses, $q$ is updated according to the following function~\cite{density2, irregular}
\begin{align*}f(q) &= \lambda(1-\varphi(1-q))\\
&= e^{-\mathbb{E}(\Lambda)\varphi(1-q)}\\
    &= e^{-\sum_{i=0}^{m-1}\rho(i)(1-q\rho(i))^{n-1}}.
\end{align*}

Let us consider $f(q)$ when the number of source symbols $n$ goes to infinity, and the ratio of coded and source symbols is fixed, i.e., $\eta = m/n$ where $\eta$ is a positive constant. 
Recall that $\rho(i) = \frac{1}{1+\alpha i}$. Notice that 
\begin{align*}
    e^{-\frac{nq}{\alpha i}}\le(1-q\rho(i))^{n-1} \le e^{-\frac{(n-1)q}{1+\alpha i}} 
\end{align*}
holds for all $n\ge 1$, $i \ge 0$, $\alpha > 0$, and $0 \le q \le 1$. We use this inequality and the squeeze theorem to calculate the limit of the exponent of $f(q)$ when $n$ goes to infinity.

We first calculate the lower bound. 
\begin{align*}
\lim_{n\rightarrow \infty}-\ln(f(q)) &=
    \lim_{n\rightarrow \infty} \sum_{i=0}^{\eta n-1}\rho(i)(1-q\rho(i))^{n-1}\\
    &\ge \lim_{n\rightarrow \infty}
    \sum_{i=0}^{\eta n-1}\rho(i)e^{-nq/(\alpha i)}\\
    &= \lim_{n\rightarrow \infty} \sum_{i=0}^{\eta n-1}\frac{1}{(1+\alpha i)e^{nq/(\alpha i)}}\\
    &= \lim_{n\rightarrow \infty} \frac{1}{n}\sum_{i=0}^{\eta n-1}\frac{1}{(\frac{1}{n}+\alpha \cdot\frac{i}{n})e^{\frac{q}{\alpha}\cdot \frac{n}{i}}}\\
    &= \int_{0}^{\eta}\frac{1}{\alpha x e^{\frac{q}{\alpha x}}}dx\\
    &= -\frac{1}{\alpha}\text{Ei}\left(-\frac{q}{\alpha \eta}\right).
\end{align*}
Here, $\text{Ei}(\cdot)$ is the exponential integral function.

We then calculate the upper bound.
\begin{align*}
\lim_{n\rightarrow \infty}-\ln(f(q)) &=
    \lim_{n\rightarrow \infty} \sum_{i=0}^{\eta n-1}\rho(i)(1-q\rho(i))^{n-1}\\
    &\le \lim_{n\rightarrow \infty}
    \sum_{i=0}^{\eta n-1}\rho(i)e^{-\frac{(n-1)q}{1+\alpha i}}\\
    &= \lim_{n\rightarrow \infty} \sum_{i=0}^{\eta n-1}\frac{1}{(1+\alpha i)e^{\frac{(n-1)q}{1+\alpha i}}}\\
    &= \lim_{n\rightarrow \infty} \frac{1}{(n-1)q}\sum_{i=0}^{\eta n-1}\frac{1}{\frac{1+\alpha i}{(n-1)q} e^{\frac{(n-1)q}{1+\alpha i}}}\\
    &= \frac{1}{\alpha}\int_{0}^{\alpha \eta / q}\frac{1}{x e^{1/x}}dx\\
    &= -\frac{1}{\alpha}\text{Ei}\left(-\frac{q}{\alpha \eta}\right).
\end{align*}
By the squeeze theorem,
\begin{align*}
\lim_{n\rightarrow \infty}-\ln(f(q))
    &= -\frac{1}{\alpha}\text{Ei}\left(-\frac{q}{\alpha \eta}\right).
\end{align*}
Plugging it into $f(q)$, we have
\begin{align*}
    \lim_{n\rightarrow \infty} f(q) &= e^{\frac{1}{\alpha}\text{Ei}\left(-\frac{q}{\alpha \eta}\right)}.
\end{align*}
By standard results~\cite[\S~2]{density2}\cite[\S~III.B]{irregular} of density evolution analysis, if
\begin{align*}
f(q) < q
\end{align*}
holds for all $q \in (0, 1]$, then the probability that all source symbols are recovered when the decoding process terminates tends to 1 as $n$ goes to infinity. Plugging in the closed-form result of $\lim_{n\rightarrow \infty} f(q)$, we get the condition $$e^{\frac{1}{\alpha}\text{Ei}\left(-\frac{q}{\alpha \eta}\right)} < q,$$ which should hold for all $ q \in (0, 1]$ for the success probability to converge to 1.
\end{proof}

We refer readers to the literature~\cite{density2} for a formal treatment on density evolution, in particular the result~\cite[\S~2.2]{density2} that $\forall q \in (0, 1] : f(q) < q$ is a sufficient condition for the success probability to converge to $1$, which we use directly in our proof. Here, we give some intuition. Recall that $q$ is the probability that a random edge in the bipartite graph connects to a source symbol that is \emph{not yet} recovered. 
Let $p$ be the probability that a random edge connects to a coded symbol that is \emph{not yet} decoded, i.e., has more than one neighbors that are not yet recovered.
Density evolution iteratively updates $q$ and $p$ by simulating the peeling decoder. For a random edge with source degree $u$, the source symbol it connects to is not yet recovered if none of the source symbol's other $u-1$ neighbors is decoded. This happens with probability $p^{u-1}$. Similarly, for a random edge with coded degree $v$, the coded symbol it connects to is not decoded if not all of the coded symbol's other $v-1$ neighbors are recovered. This happens with probability $1-(1-q)^{v-1}$.

Because $q$ and $p$ are the probabilities with regard to a random edge, we take the mean over the distributions of source and coded degrees of the edge, and the results are the new values of $q$ and $p$ after one iteration of peeling. In particular, in each iteration~\cite{irregular,density2},
\begin{align*}
p &\leftarrow \sum_v \varphi_v \left(1-(1-q)^{v-1}\right),\\
q &\leftarrow \sum_u \lambda_u p^{u-1}.
\end{align*}
By the definition of the generating functions of $\varphi$ and $\lambda$, the above equations can be written as
\begin{align*}
p &\leftarrow 1-\varphi(1-q),\\
q &\leftarrow \lambda(p).
\end{align*}
Combine the two equations, and we get 
\begin{align*}
q &\leftarrow \lambda(1-\varphi(1-q)).
\end{align*}
Notice that its right hand side is $f(q)$. Intuitively, by requiring $f(q) < q$ for all $q \in (0, 1]$, we make sure that the peeling decoder always makes progress, i.e., the non-recovery probability $q$ gets smaller, regardless of the current $q$. Conversely, if the inequality has a fixed point $q^*$ such that $f(q^*) = q^*$, then the decoder will stop making progress after recovering $(1-q^*)$-fraction of source symbols, implying a failure.

\end{document}